\newcommand{\bigO}{\mathcal{O}}
\newcommand{\F}{\mathbb{F}}
\newcommand{\R}{\mathbb{R}}
\newcommand{\E}{\mathbb{E}}
\newcommand{\Z}{\mathbb{Z}}
\newcommand{\N}{\mathbb{N}}
\renewcommand{\Pr}{\mathbb{P}}
\newlength{\strutdepth}%
\newcommand{\notes}[3]{
	\noindent{%
		\color{#1}{[#3]}\color{#1}}%
	\strut\vadjust{\kern-\strutdepth%
		\vtop to \strutdepth{%
			\baselineskip\strutdepth%
			\vss\llap{{\large\color{#1}\textbf{#2}\quad\color{black}}}\null%
		}%
	}%
}
\newcommand{\QFT}{\textnormal{QFT}}
\newcommand{\Simon}{\textsc{Simon}}
\newcommand{\Shor}{\textsc{Shor}}
\newcommand{\Period}{\textsc{Period}}
\newcommand{\EHS}{\textsc{Eker\aa -H\aa stad}}
\DeclareMathOperator{\id}{id}
\newcommand{\HD}{\{0,1\}}
\newtheorem*{lemma*}{Lemma}
\newcommand{\EM}{\textnormal{EM}}
\definecolor{orcidlogocol}{HTML}{A6CE39}
\title{Quantum Period Finding is Compression Robust}
\titlerunning{Quantum Period Finding is Compression Robust}
\author{
	Alexander May\thanks{Funded by DFG under Germany's Excellence Strategy - EXC 2092 CASA - 390781972. 
	} \href{https://orcid.org/0000-0001-5965-5675}{\protect\includegraphics[height=\fontcharht\font`B]{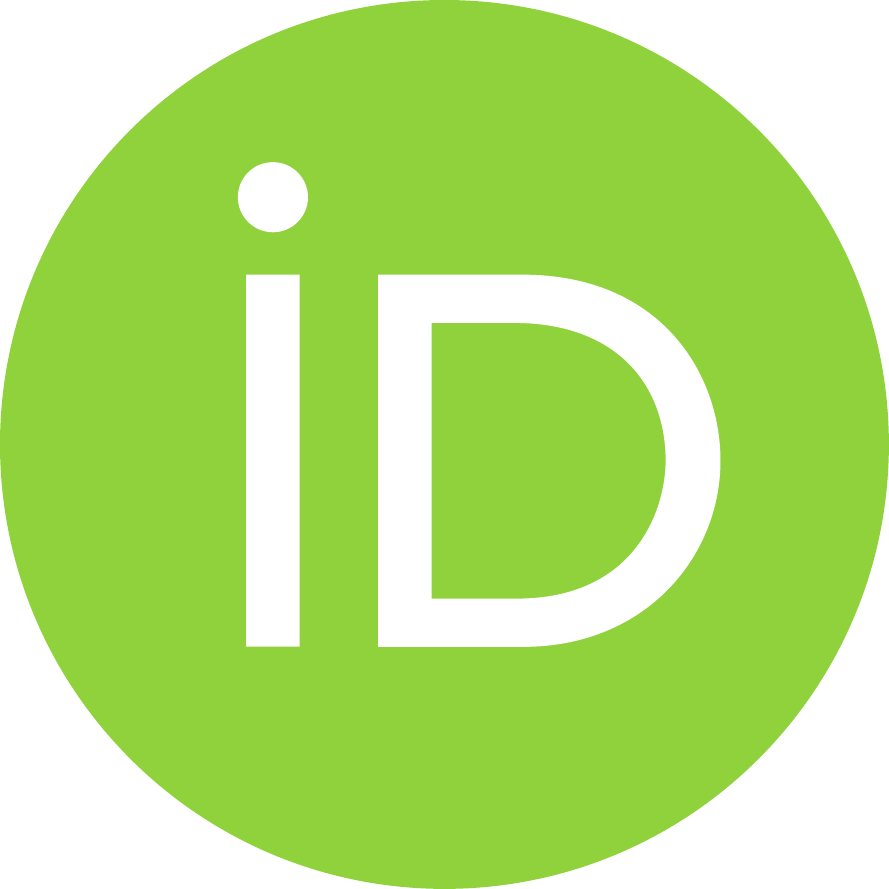}}\hspace*{1pt}
	\and
	Lars Schlieper$^\star$ \href{https://orcid.org/0000-0002-4870-1012}{\protect\includegraphics[height=\fontcharht\font`B]{orcid.pdf}}}
\institute{
	Horst G\"ortz Institute for IT Security \\ 
	Ruhr-University Bochum, Germany \\
	\email{\{alex.may,lars.schlieper\}@rub.de}}
\begin{document}
\maketitle

\begin{abstract}
We study quantum period finding algorithms such as Simon and Shor (and its variants Eker\aa-H\aa stad and Mosca-Ekert). For a periodic function $f$ these algorithms produce -- via some quantum embedding of $f$ -- a quantum superposition $\sum_x \ket{x}\ket{f(x)}$, which requires a certain amount of output qubits that represent $\ket{f(x)}$. We show that one can lower this amount to a single output qubit by hashing $f$ down to a single bit in an oracle setting.

Namely, we replace the embedding of $f$ in quantum period finding circuits by oracle access to several embeddings of hashed versions of $f$. We show that on expectation this modification only doubles the required amount of  quantum measurements, while significantly reducing the total number of qubits. For example, for Simon's algorithm that finds periods in $f: \F_2^n \rightarrow \F_2^n$ our hashing technique reduces the required output qubits from $n$ down to $1$, and therefore the total amount of qubits from $2n$ to $n+1$. We also show that Simon's algorithm admits real world applications with only $n+1$ qubits by giving a concrete realization of a hashed version of the cryptographic Even-Mansour construction. Moreover, for a variant of Simon's algorithm on Even-Mansour that requires only classical queries to Even-Mansour we save a factor of (roughly) 4 in the  qubits.

Our oracle-based hashed version of the
Eker\aa-H\aa stad algorithm for factoring $n$-bit RSA reduces the required qubits from $(\frac 3 2 + o(1))n$ down to $(\frac 1 2 + o(1))n$. 
We also show a real-world (non-oracle) application in the discrete  logarithm setting by giving a concrete realization of a hashed version of Mosca-Ekert for the Decisional Diffie Hellman problem in $\F_{p^m}$, thereby reducing the number of qubits by even a linear factor from $m \log p$ downto $\log p$.

	\keywords{Quantum period finding, Fourier transform, Simon, Shor, cryptographic applications}
\end{abstract}

\section{Introduction}

Throughout this paper, we consider only logical qubits that are error-free.
Although there is steady progress in constructing larger quantum computers, within the next years the number of qubits seems to be too limited for tackling problems of interesting size, e.g. for period finding applications in cryptography \cite{DBLP:conf/isita/KuwakadoM12,DBLP:conf/crypto/KaplanLLN16,DBLP:conf/asiacrypt/Leander017,DBLP:journals/qic/SantoliS17,DBLP:conf/asiacrypt/RoettelerNSL17,DBLP:journals/qic/HanerRS17,DBLP:journals/ieeesp/RoettelerS18}. 

Shor's algorithm~\cite{DBLP:conf/focs/Shor94} for polynomial time factorization of $n$-bit numbers computes a superposition $\sum_x \ket{x}\ket{f(x)}$ with $2n$ {\em input qubits}  representing the input $\ket{x}$ to $f$ and {\em $n$ output qubits} representing the output $\ket{f(x)}$ of the function.

However, it may not be necessary to implement a full-fledged $3n$-qubit Shor algorithm in order to factor numbers or compute discrete logarithms. Quantum computers with a very limited number of qubits might still serve as a powerful oracle that assists us in speeding up classical computations. For instance, Bernstein, Biasse and Mosca~\cite{DBLP:conf/pqcrypto/BernsteinBM17} developed an algorithm that factors $n$-bit numbers with the help of only a sublinear amount of $n^{\frac 2 3}$ qubits in subexponential time that is (slightly) faster than the currently best known purely classical factorization algorithm.

Several other algorithms saved on the number of qubits in Shor's algorithm by shifting some more work into a classical post-processing, while -- in contrast to~\cite{DBLP:conf/pqcrypto/BernsteinBM17} -- still preserving polynomial run time. Interestingly, all these algorithms concentrate on  reducing the input qubits, while keeping $n$ output qubits. Seifert~\cite{DBLP:conf/ctrsa/Seifert01} showed that -- using for the classical post-process simultaneous Diophantine approximations instead of continued fractions -- the number of input qubits can be reduced from $2n$ to $(1+o(1))n$. For $n$-bit RSA numbers, which are a product of two $n/2$-bit primes, Eker\aa\ and H\aa stad \cite{DBLP:journals/corr/EkeraH17} reduced the number of input qubits down to $(\frac 1 2 + o(1))n$,  using  some variant of the Hidden Number Problem~\cite{DBLP:conf/soda/BonehV97} in the post-process. Thus, the Eker\aa -H\aa stad version of Shor's algorithm factors $n$-bit RSA with a total of $(\frac 3 2 + o(1))n$ qubits. 

Mosca and Ekert~\cite{DBLP:conf/qcqc/MoscaE98} showed that one can reduce the number of input qubits even down to a single one, at the cost of an increased depth of Shor's quantum circuit. 

\hskip 0.1cm

\noindent {\bf Our contribution.} We hash $f(x)$ in the output qubits down to $t$ qubits, where $t$ can be as small as $1$. This can be realized using quantum embeddings of $h \circ f$ for different hash functions $h$, for which we assume oracle access. Our basic observation is that hashing preserves the periodicity of $f$. Namely, if $f(x) = f(x+s)$ for some period $s$ and all inputs $x$ then also 
\[
  h(f(x)) = h(f(x+s)) \textrm{ for the period $s$ and all inputs $x$}.
\] 
The drawback of hashing is that $h$ certainly introduces many more undesirable collisions $h(f(x)) = h(f(x'))$ where $x,x'$ are not a multiple of $s$ apart. Surprisingly, even for 1-bit range hash functions this plethora of undesirable collisions  does not at all affect the correctness of our hashed quantum period finding algorithms, and only insignificantly increases their runtimes.

More precisely, concerning correctness we show that a replacement of $f$ by some hashed version of $f$ has the following effects.

\begin{description}
\item[Simon's algorithm:] In the input qubits, we still measure only vectors $y$ that are orthogonal to the period $s$. The amplitudes of all other inputs cancel out.
\item[Shor's algorithm:] Let the period be $d=2^r$, and let us use $q>r$ input qubits. Then we still measure in the input qubits only numbers $y$  that are multiples $2^{q-r}$. The amplitudes of all other inputs cancel out. In the case of general (not only power of two) periods we measure all inputs $y\not=0$ with exactly half the probability as without hashing. 
\end{description} 

Our correctness property immediately implies that the original post-processing in Simon's algorithm (Gaussian elimination) and in Shor's algorithm (e.g. continued fractions) can still be used in the hashed version of the algorithms for period recovery. 

However, this does not automatically imply that we achieve similar runtimes. Namely, in the original algorithms of Simon and Shor we measure all $y$ having a non-zero amplitude with a uniform probability distribution. In Simon's algorithm for some period $s \in \F_2^n$ we obtain each of the $2^{n-1}$ many $y \in \F_2^n$ orthogonal to $s$ with probability $\frac{1}{2^{n-1}}$. In Shor's algorithm with period $d=2^r$, we measure each of the $d$ many possible multiples $y$ of $2^{q-r}$ with probability $\frac{1}{d}$. 

These uniform probability distributions are destroyed by moving to the hashed version of the algorithms. Since $h(f(x)) = h(f(x'))$ for $x \not= x'$ happens for universal 1-bit range hash functions with probability $\frac 1 2$, the undesirable collisions put a probability weight of (roughly) $\frac 1 2$ on measuring $y = 0$ in the input qubits.

This seems to be bad news, since neither in Simon's algorithm does the zero vector $y$ provide information about $s$, nor does in Shor's algorithm the zero-multiple $y$ of $2^{q-r}$ provide information about $d$. 
However as good news, we show that besides putting probability weight $\frac 1 2$ on $y=0$, hashing does not destroy the probability distribution stemming from the amplitudes of quantum period finding algorithms. Namely, we show that for the whole class of quantum period finding circuits that we consider -- including Simon, Shor (and its variants Eker\aa -H\aa stad, Mosca-Ekert) -- the following result holds:  If the probability to measure $y$ is $p(y)$ when using $f$, then we obtain probability $p(y)/2$ to measure $y$ when using $h \circ f$,  where the latter probability is taken over the random choice of $h$ from a family of 1-bit range universal hash functions.

Put differently, if we condition on the event that we do {\em not} measure $y=0$ in the input bits (which happens in roughly every second measurement) in both cases  -- using $f$ itself or its hashed version $h \circ f$  -- we obtain exactly the same probability distribution for the measurements of any $y \not=0$. This implies that our hashing approach preserves not only the correctness but also the runtime analysis of any processing of the measured data in a classical post-process. Thus, at the cost of only twice as many quantum measurements we save all but one of the output qubits. More generally, we show that at the cost of $\frac 1 {1-2^{-t}}$-times more measurements we may compress to $t$ output qubits.

\medskip

In particular, we show that the original Simon algorithm \cite{DBLP:conf/focs/Simon94} --- that recovers for a periodic function $f: \F_2^n \rightarrow \F_2^n$ its period in time polynomial in $n$ with expected $n+1$ measurements using $2n$ qubits --- admits an {\em oracle-based} hashed version with expected $2(n+1)$ measurements using only $n+1$ qubits. Moreover, we show that this leads to an {\em explicit} (non-oracle, efficiently constructable) realization of the quantum Even-Mansour attack~\cite{DBLP:conf/isita/KuwakadoM12,DBLP:conf/crypto/KaplanLLN16} with only $n+1$ qubits. For the quantum attack~\cite{DBLP:conf/asiacrypt/BonnetainHNSS19} on Even-Mansour with only classical access to the cipher, called Offline-Simon, we provide an {\em explicit} hashed realization that saves even (roughly) a factor of $4$ in the number of qubits.

The original Eker\aa -H\aa stad version of Shor's algorithm that computes discrete logarithms $d$ in some abelian group $G$ in polynomial time using $(1+o(1)) \log d + \log(|G|)$ qubits requires in its {\em oracle-based} hashed version only $(1+o(1)) \log d$ qubits. Moreover, the Eker\aa -H\aa stad algorithm computes the factorization of an RSA modulus $N=pq$ of bit-size $n$ in time polynomial in $n$ using $(\frac 3 2 + o(1))n$ qubits, whereas our {\em oracle-based} hashed version reduces this to only $(\frac 1 2 + o(1))n$ qubits. We leave it as an open problem whether there exist an {\em explicit} hashed Eker\aa -H\aa stad realization. For Eker\aa -H\aa stad, one has to compute hashed versions of the exponentiation function $f_{a,N}: x \mapsto a^x \bmod N$. Notice that it is of course {\em not sufficient} to compute $f_{a,N}$ first, and afterwards hash the result, since this would require qubits for representing the full range of $f_{a,N}$. 

As a positive result in this direction, we show that exponentiation functions in certain cases indeed admit {\em explicit} realizations. 
More precisely, we provide an {\em explicit} realization of a hashed version of Mosca-Ekert in the discrete logarithm setting, where we reduce the number of qubits to solve the Decisional Diffie Hellman problem in the subgroup of quadratic residues of $\F_{p^m}$ by a linear factor $\Theta(m)$ from $1 + m \cdot \log p$ qubits downto only $1+ \log p$ qubits. Our {\em explicit} hashed Mosca-Ekert realization provides a good example that universal hash function families are in general not necessary for our hash technique, since we show correctness for our Diffie Hellman application using a single fixed hash function.

We believe that the universal hash function family property might be relaxed in other realizations as well. We conjecture that often in practice a single $h$ should still work. Even choosing $h$ simply as the projection of $f$ to a single bit should work for most functions of interest. 
We believe that it is of theoretical and practical interest to study in more generality,  which classes of $f$ admit a memory-efficient computation of their hashed versions.

\hskip 0.1cm

Our paper is organized as follows. 
In \cref{hashed_simon} we present our first main result that Simon's algorithm is compression robust.
In \cref{sec:even_mansou} we provide an explicit hashed Simon realization of Even-Mansour, and transfer this in \cref{sec:offline_simon} to an explicit hashed Offline-Simon realization.

%
Subsequently, we generalize the hash concept to Shor's algorithm and a more general class of period-finding circuits. 
For didactic reasons, we study in \cref{hashed_shor} first the simple case of Shor's algorithm for periods that are a power of two. In \cref{shor_general}, as our second main result we generalize  to any quantum circuits that fall in our period finding class. As a consequence, in \cref{shor_general} we obtain a hashed version of Shor's algorithm with general periods, and in \cref{ekera_sec} a hashed version of Eker\aa -H\aa stad. In \cref{sec:mosca_ekert}, we describe additional properties of our hash function family that admit a hashed version of Mosca-Ekert, and provide an {\em explicit} Mosca-Ekert realization for Decisional Diffie Hellman in $\F_{p^m}$ in \cref{sec:ddh}. 



%
%
%
%
\section{Preliminaries on Period Finding Algorithms}
\label{prelim_sec}
%
%
Let us first recall some quantum notation. The reversible {\em quantum embedding} of a classical function $f$ is defined as 
\[
  U_f: \ket{x}\ket{y}\mapsto\ket{x}\ket{y+f(x)}.
\]
The $1$-qubit {\em Hadamard gate} realizes the mapping $H_1: \ket{x} \mapsto\frac{1}{\sqrt{2}}\sum_{y\in\F_2}(-1)^{xy}\ket{y}$. Its $n$-qubit version is defined as the $n$-fold tensor product $H_n = \bigotimes_{i=1}^n H_1$. The $n$-qubit {\em Quantum Fourier Transform} (QFT) is the mapping
\[
  \QFT_n:	\ket{x}\mapsto\frac{1}{\sqrt{2^n}}\sum_{y\in\F_2^n}e^{2\pi i\frac{x}{2^n}y}\ket{y}.
\] 
Notice that $\QFT_1=H_1$.

\begin{definition}
\label{universal}
	A hash function family $\mathcal{H}_t:=\{h: \mathcal{D} \to \HD^t\}$ is {\em universal} if for all $x,y\in\mathcal{D}$, $x\not=y$ we have
\[
  \mathbb{P}_{h\in\mathcal{H}_t}\left[h(x)=h(y)\right]=\frac{1}{2^t}\;.
\]
\end{definition}
\noindent Efficient instantiations of (homomorphic) universal hash function families exist, e.g.
  \begin{equation}\label{eq:hash}
	\mathcal{H}_t=\{h_r:\F_2^n\to\F_2^t \;|\; r\in \left(\F_2^n\right)^t,\ h_r(x)=(\langle x , r_1 \rangle,\ldots,\langle x , r_t \rangle) \}\;. 
  \end{equation}
It is easy to see that {\em strongly 2-universal} hash function families as defined in~\cite{DBLP:books/daglib/0012859} are universal in the sense of \cref{universal}.
%
%

%
%

\section{Hashed-Simon}\label{hashed_simon}

Let us briefly recall Simon's original algorithm. Let $f: \F_2^n \rightarrow \F_2^n$ be periodic with period $s \in \F_2^n \setminus \{0^n\}$, that is $f(x) = f(x+s)$ for all $x \in \F_2^n$. We call $f$ a {\em Simon function} if it defines a $(2:1)$-mapping, i.e.
\[
  f(x) = f(y) \Leftrightarrow (y = x) \textrm{ or } (y=x+s).
\]
The use of Simon functions allows for a clean theoretical 
analysis, although Simon's algorithm works also for more general periodic functions as shown in~\cite{DBLP:conf/soda/AlagicMR07,articleQuantumAlgorithmsforAlgebraicProblems,DBLP:conf/asiacrypt/Leander017}. For ease of notation, we restrict ourselves to Simon functions.

The {\em Simon circuit} $Q_f^{\Simon}$ from \cref{circuit:simon} uses $n$ input and $n$ output qubits for realizing the embedding of $f$. It can easily be shown that in the $n$ input qubits we measure only $y \in \F_2^n$ such that $y \perp s$, i.e. $ys = 0$.  

\begin{figure}[h]
	\centering
	\mbox{
		\Qcircuit @C=1em @R=1em {
			\lstick{\ket{0^n}}	& \qw & \gate{H_n} & \qw & \multigate{1}{U_{f}} & \qw & \gate{H_n} 	& \qw & \meter\\
			\lstick{\ket{0^n}}	& \qw & \qw	 	   & \qw & \ghost{U_{f}}        & \qw & \qw	     	& \qw & \qw
		}
	}
	\caption{Quantum circuit $Q^{\Simon}_{f}$}\label{circuit:simon}
\end{figure}
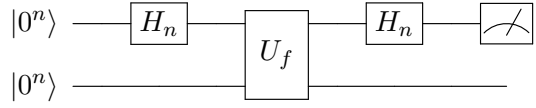

The \textsc{Simon} algorithm uses $Q_f^{\Simon}$ until we have collected $n-1$ linearly independent vectors, from which we compute the unique vector $s$ that is orthogonal to all of them using Gaussian elimination in time $\bigO(n^3)$.

Our \textsc{Hashed-Simon} (Algorithm~\ref{alg:simon}) is identical to the \textsc{Simon} algorithm with the only difference that $Q_f^{\textsc{Simon}}$ is replaced by $Q_{h \circ f}^{\textsc{Simon}}$, where in each iteration we instantiate $Q_{h \circ f}^{\textsc{Simon}}$ with some hash function $h$ freshly drawn from a universal $t$-bit range hash function family $ \mathcal{H}_t$. Notice that \textsc{Simon} can be considered as special case of \textsc{Hashed-Simon}, where we choose $t=n$ and the identity function $h=\textrm{id}$. This slightly abuses notation, since $\mathcal{H}_n=\{\textrm{id}\}$ is not universal. However, the following \cref{orthogonality} holds without universality of $\mathcal{H}_n$. In \cref{orthogonality} we show the correctness property of \textsc{Hashed-Simon} that by replacing $Q_{f}^{\textsc{Simon}}$ with $Q_{h \circ f}^{\textsc{Simon}}$, we still measure only $y$ orthogonal to $s$. 
\vspace*{-0.5cm}
\begin{algorithm}[ht]
	\DontPrintSemicolon
	\SetAlgoLined
	\SetKwInOut{Input}{Input}\SetKwInOut{Output}{Output}
	
	\SetKwComment{COMMENT}{$\triangleright$\ }{}
	
	\Input{ Simon function $f: \F_2^n \rightarrow \F_2^n$, universal $\mathcal{H}_t:= \{h: \F_2^n \rightarrow \F_2^t\}$}
	\Output{ Period $s$ of $f$}
	
	\Begin{
		Set $Y=\emptyset$.\; 
		\Repeat{$Y$ contains $n-1$ linear independent vectors}{
			Run $Q^{\Simon}_{h \circ f}$ on $\ket{0^n}\ket{0^t}$ for some freshly chosen $h \in_R \mathcal{H}_t$. \;
			Let $y$ be the measurement of the $n$ input qubits. \;
			If $y \notin \textrm{span}(Y)$, then include $y$ in $Y$.
		}
		Compute $\{s\} = Y^\perp \setminus \{0^n\}$ via Gaussian elimination.\;
		\KwRet{$s$.}
	}
	\caption{\textsc{Hashed-Simon}\label{alg:simon}}
\end{algorithm}

\begin{lemma}[Orthogonality]\label{orthogonality} 
	Let $f:\F_2^n \rightarrow \F_2^n$ be a Simon function with period $s$. Let $h:\F_2^n \rightarrow \F_2^t$ and $f_h = h \circ f: \F_2^n \rightarrow \F_2^t$. Let us apply $Q^{\Simon}_{h \circ f}$ on $\ket{0^n}{\ket{0^t}}$. 
	Then we obtain superposition 	
	\begin{equation*}	
		\sum_{f_h(x)\in\textrm{Im}(f_h)} \sum_{\underset{y\perp s}{y\in \mathbb{F}_2^n}} w_{y,f_h(x)} \Ket{y} \ket{f_h(x)}
		\textrm{ where }
		w_{y,f_h(x)} = \frac{1}{{2^{n}}}\sum_{x\in f_h^{-1}(f_h(x))} (-1)^{xy}\;.
	\end{equation*}	 
\end{lemma}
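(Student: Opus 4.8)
The plan is to just track the state through the circuit $Q^{\Simon}_{h\circ f}$ step by step, exactly as in the analysis of the original Simon algorithm, and then observe that the only place the structure of $f$ (as opposed to $h\circ f$) entered was in computing the inner sum; since the periodicity $f(x)=f(x+s)$ is inherited by $f_h = h\circ f$, the key cancellation still goes through. Concretely, after the first Hadamard layer the state is $\frac{1}{\sqrt{2^n}}\sum_{x\in\F_2^n}\ket{x}\ket{0^t}$; after $U_{f_h}$ it is $\frac{1}{\sqrt{2^n}}\sum_{x}\ket{x}\ket{f_h(x)}$; and after the second Hadamard layer $H_n$ on the input register it becomes
\[
  \frac{1}{2^n}\sum_{x\in\F_2^n}\sum_{y\in\F_2^n}(-1)^{xy}\ket{y}\ket{f_h(x)}.
\]
Regrouping the sum by the value $z=f_h(x)\in\mathrm{Im}(f_h)$ rather than by $x$, this is $\sum_{z\in\mathrm{Im}(f_h)}\sum_{y\in\F_2^n} w_{y,z}\ket{y}\ket{z}$ with $w_{y,z}=\frac{1}{2^n}\sum_{x\in f_h^{-1}(z)}(-1)^{xy}$, which is exactly the claimed amplitude formula except that the $y$-sum is still over all of $\F_2^n$.

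The remaining step — and the only substantive one — is to show $w_{y,z}=0$ whenever $y\not\perp s$, so that the outer $y$-sum collapses to $y\perp s$. For this I would use that $f$ is a Simon function with period $s$: then $f(x)=f(x+s)$ for all $x$, hence $f_h(x)=h(f(x))=h(f(x+s))=f_h(x+s)$, so the fibre $f_h^{-1}(z)$ is invariant under the involution $x\mapsto x+s$. Pair up the elements of $f_h^{-1}(z)$ into orbits $\{x,x+s\}$ (these are genuine two-element sets since $s\neq 0^n$). Within each orbit the contribution to $w_{y,z}$ is $(-1)^{xy}+(-1)^{(x+s)y}=(-1)^{xy}\bigl(1+(-1)^{sy}\bigr)$, which vanishes precisely when $sy=1$, i.e. when $y\not\perp s$. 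Summing over orbits gives $w_{y,z}=0$ for all $y\not\perp s$, which is what we need. (Note this argument does not require $f_h$ itself to be $2{:}1$ — the fibres of $f_h$ may be large — only that each fibre is a disjoint union of $s$-orbits, which is exactly what periodicity of $f$ gives; so universality of $\mathcal{H}_t$ is genuinely not used here, consistent with the remark preceding the lemma.)

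I do not expect a real obstacle: the computation is the textbook Simon analysis with $f$ replaced by $f_h$, and the one idea needed is the orbit-pairing, which is immediate once one notes periodicity is preserved under post-composition with $h$. The only mild care is bookkeeping — making sure the double sum is reindexed correctly (summing over $x$ versus over $(z,x)$ with $x\in f_h^{-1}(z)$ is a bijection on index sets, so no normalization changes) and that the empty-contribution terms $w_{y,z}$ with $y\not\perp s$ are dropped rather than kept with zero coefficient. One could also phrase the vanishing via the character-sum identity $\sum_{x\in f_h^{-1}(z)}(-1)^{xy}=(-1)^{x_0 y}\sum_{v\in \langle s\rangle + (\text{coset reps})}\cdots$, but the explicit orbit pairing is cleaner and self-contained.
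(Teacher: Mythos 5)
Your proposal is correct and follows essentially the same route as the paper: both exploit the fact that each fibre $f_h^{-1}(z)$ is invariant under $x\mapsto x+s$ and reduce the vanishing for $y\not\perp s$ to the identity $(-1)^{xy}+(-1)^{(x+s)y}=(-1)^{xy}\bigl(1+(-1)^{sy}\bigr)$. The only cosmetic difference is that the paper symmetrizes the state as $\tfrac12(\ket{x}+\ket{x+s})$ \emph{before} applying the second Hadamard layer, whereas you apply the Hadamard first and then pair the fibre elements into $s$-orbits; the computation is identical.
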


\begin{proof}
	Since $f$ is a Simon function we have $f(x) = f(x+s)$ and therefore $f_h(x) = f_h(x+s)$. This implies $x \in f_{h}^{-1}(z)$ iff $x+s \in f_{h}^{-1}(z)$. 
	
	An application of $Q_{h\circ f}$ on input $\ket{0^n}{\ket{0^t}}$ yields for the operations $H_n\otimes I_t$ and $U_{f_{h}}$ 
	\[
	\Ket{0^n}\Ket{0^t} 
	\mathmakebox[12mm][c]{\overset{H_n\otimes I_t}{\rightarrow}}
	\frac{1}{2^{n/2}}\sum_{x \in \mathbb{F}_2^n} \Ket{x}\Ket{0^t}
	{\overset{U_{f_{h}}}{\rightarrow}}
	\frac{1}{2^{n/2}}\sum_{x \in \mathbb{F}_2^n} \Ket{x}\Ket{f_{h}(x)}.
	\]
	Using $x \in f_{h}^{-1}(z)$ iff $x+s \in f_{h}^{-1}(z)$, we obtain
	\[
	\frac{1}{2^{n/2}}\sum_{x \in \mathbb{F}_2^n} \Ket{x}\Ket{f_{h}(x)}
	{=}
	\frac{1}{2^{n/2}}\sum_{x \in \mathbb{F}_2^n} \frac{1}{2}\left(\Ket{x}+\Ket{x+s}\right)\Ket{f_{h}(x)}
	\]
	An application of $H_n$ now yields
	\begin{align*}
	& \frac{1}{2^{n}}\sum_{x \in \mathbb{F}_2^n}\sum_{y \in \mathbb{F}_2^n} \frac{1}{2}\left((-1)^{xy}+(-1)^{(x+s)y}\right)\Ket{y}\Ket{f_{h}(x)}\\
	{=}&
	\frac{1}{2^{n}}\sum_{x \in \mathbb{F}_2^n}\sum_{y \in \mathbb{F}_2^n} \frac{1}{2}(-1)^{xy}\left(1+(-1)^{sy}\right)\Ket{y}\Ket{f_{h}(x)}\\
	{=}&
	\frac{1}{2^{n}}\sum_{x \in \mathbb{F}_2^n}\sum_{\underset{y\perp s}{y\in \mathbb{F}_2^n}} (-1)^{xy}\Ket{y}\Ket{f_{h}(x)}
	\;\textnormal{.}
	\end{align*}
	The statement of the lemma follows.
\end{proof}

	\noindent From \cref{orthogonality}'s superposition
	\begin{equation}
	\label{amplitude}
		\sum_{f_h(x)\in\textrm{Im}(f_h)} \sum_{\underset{y\perp s}{y\in \mathbb{F}_2^n}}\underbrace{\frac{1}{{2^{n}}}\sum_{x\in f_h^{-1}(f_h(x))} (-1)^{xy}}_{w_{y,{f_h(x)}}} \Ket{y} \ket{f_h(x)}
	\end{equation}	
	we see that only $y \in \F_2^n$ with $y \perp s$ have a non-vanishing amplitude $w_{y,f_h(x)}$.
	
		Assume that we measure some fixed $z = f_h(x) \in \{0,1\}^t$ in the output qubits. Then an easy calculation shows that \cref{amplitude} collapses  to
		\begin{equation}
		\label{amplitude_fix}
		\sum_{\underset{y\perp s}{y\in \mathbb{F}_2^n}} \underbrace{\frac{1}{(2^{n}\cdot|f_h^{-1}({z})|)^{1/2}}\sum_{x \in f_h^{-1}({z})}(-1)^{xy}}_{w_{y,{z}}} \Ket{y} \Ket{{z}}\;.
		\end{equation}			

	Recall that \Cref{orthogonality} contains the analysis of Simon's original algorithm as the special case $h=\textrm{id}$. In this case, we know that by the definition of a Simon function $|f_h^{-1}(z) |=2$ for all $z$ and $\sum_{x \in f_h^{-1}({z})}(-1)^{xy} \in \{\pm 2\}$.
Thus, all $y \perp s$ have amplitude $\pm \frac{1}{2^{(n-1)/2}}$. This means that a measurement yields the uniform distribution over all $y \perp s$.

The following lemma will be useful, when we analyze superpositions over all $z$.

\begin{lemma}\label{Simon_Sum_Zero}
	Let $f:\F_2^n\mapsto\F_2^\ell$ be a Simon function. Then $\sum_{z\in\F_2^\ell} w_{y,z} = 0$
	for all $y\neq 0 $.
\end{lemma}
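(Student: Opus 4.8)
The plan is to collapse the sum over all output values $z\in\F_2^\ell$ into a single complete character sum over the input space $\F_2^n$, and then invoke orthogonality of characters. The point is that although the individual coefficients $w_{y,z}$ (as in \eqref{amplitude_fix}, now with $f$ in place of $f_h$) depend on which fiber $f^{-1}(z)$ one is looking at, summing the numerators $\sum_{x\in f^{-1}(z)}(-1)^{xy}$ over all $z$ simply reassembles the full sum $\sum_{x\in\F_2^n}(-1)^{xy}$, which vanishes for $y\neq 0$.

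Concretely, I would proceed in three short steps. First, since $f$ is a Simon function, every nonempty fiber $f^{-1}(z)$ has the form $\{x_0,x_0+s\}$ with $s\neq 0$, hence has cardinality exactly $2$; therefore the normalization factor $(2^{n}\cdot|f^{-1}(z)|)^{-1/2}=2^{-(n+1)/2}$ is independent of $z$ on $\textrm{Im}(f)$, and for $z\notin\textrm{Im}(f)$ one reads $w_{y,z}=0$. Pulling this constant out of the sum, it suffices to prove $\sum_{z\in\F_2^\ell}\sum_{x\in f^{-1}(z)}(-1)^{xy}=0$ for $y\neq 0$. Second, I would interchange the two summations: the fibers $\{f^{-1}(z)\}_{z\in\F_2^\ell}$ partition $\F_2^n$ (each $x$ lies in the unique fiber $f^{-1}(f(x))$, and fibers of non-image points are empty), so the double sum equals $\sum_{x\in\F_2^n}(-1)^{xy}$. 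Third, this complete character sum over $\F_2^n$ equals $2^n$ if $y=0$ and $0$ otherwise, which gives the claim.

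I do not anticipate a genuine obstacle; the only two places that merit a sentence of care are (i) that for a Simon function all nonempty fibers have the same size, so the $z$-dependent normalization in $w_{y,z}$ really does factor out as a global constant, and (ii) that the fibers of $z\notin\textrm{Im}(f)$ are empty, so that re-indexing the $z$-sum recovers each $x\in\F_2^n$ exactly once with no double counting. The remaining ingredient is just the standard orthogonality relation $\sum_{x\in\F_2^n}(-1)^{xy}=2^n\,[y=0]$.
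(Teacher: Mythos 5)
Your proof is correct and takes essentially the same route as the paper's: both reindex the sum over image points $z$ as a sum over all $x\in\F_2^n$ (using that the fibers of a Simon function partition $\F_2^n$) and then invoke the orthogonality relation $\sum_{x\in\F_2^n}(-1)^{xy}=0$ for $y\neq 0$. The only cosmetic differences are that you work with the normalization from Eq.~(\ref{amplitude_fix}) rather than the $z$-independent $2^{-n}$ from Eq.~(\ref{amplitude}) used in the paper's proof --- immaterial, since the character sum vanishes either way and you correctly note the constant factors out --- and that you dispense with the paper's (unnecessary) preliminary case split on whether $y\perp s$.
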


\begin{proof}	
	Fix $y\not=0^n$. If $y\not\perp s$ then all $w_{y,z}=0$ and thus the claim follows. 
	Hence, in the following let $y \perp s$. If $z \notin f(\F_2^n)$ then $w_{y,z}=0$. Therefore 
	\[
	\sum_{z \in \F_2^\ell} w_{y,z} = \sum_{z\in f(\F_2^n)} w_{y,z}.
	\]
	Since $f$ is a Simon function, $f$ is a (2:1)-mapping. Thus
	\[
	\sum_{z\in f(\F_2^n)} w_{y,z} = \frac{1}{2}\sum_{x\in \F_2^n} w_{y,f(x)}.
	\]
	Using the definition of $w_{y,f(x)}$ in Eq.~(\ref{amplitude}) with $h=\textrm{id}$ yields
	\[
	\frac{1}{2}\sum_{x\in \F_2^n} w_{y,f(x)} = \frac{1}{{2^{n+1}}}\sum_{x\in \F_2^n}  (-1)^{xy}\;.
	\]
	Since for $y\not=0$ we have $\sum_{x\in \F_2^n}  (-1)^{xy} = 0$, the claim follows.
\end{proof}

Let us now develop some intuition for the amplitudes $w_{y,z}$ in Eq.~(\ref{amplitude_fix}) for $1$-bit range hash functions $h: \F_2^n \rightarrow \F_2$. We expect that $|f_h^{-1}(z)| \approx 2^{n-1}$. We first look at the amplitude $w_{0^n, z}$ of $\ket{y} = \ket{0^n}$. Since for all $x \in \F_2^n$ we have $(-1)^{xy}=1$, the amplitude  of $\ket{0^n}$ adds up to $w_{0^n, z} = \left( \frac{f_h^{-1}(z)}{2^n} \right)^{\frac 1 2} \approx \frac 1 {\sqrt{2}}$. Hence, we expect to measure the zero-vector $0^n$ with probability approximately $\frac 1 2$. This seems to be bad news, since the zero-vector is the only one orthogonal to $s$ that does not provide any information about $s$. 

However, we show that all $y \perp s$ with $y \not= 0^n$ still appear with significant amplitude. Intuitively, $\sum_{x \in f_h^{-1}(z)}(-1)^{xy}$ describes for $y \not=0^n$ a random walk with $| f_h^{-1}(z) |$ steps. Thus, this term should contribute on expectation roughly  $|f_h^{-1}(z)|^{\frac 12}$ to the amplitude of $\ket{y}$. So we expect for all $y \perp s$ with $y \not= 0^n$ an amplitude of 
\[ 
  w_{y,z} = \frac{\sum_{x \in f_h^{-1}({z})}(-1)^{xy}}{(2^{n}\cdot|f_h^{-1}({z})|)^{1/2}} \approx \frac{1}{2^{n/2}}. 
\] 
This in turn implies that conditioned on the event that we do not measure $0^n$ (which happens with probability roughly $\frac 1 2$), we still obtain the uniform distribution over all remaining $y \perp s$.  

We make our intuition formal in the following theorem.

\begin{theorem}\label{measuring_probabilities}
	Let $\mathcal{H}_t=\{h: \F_2^n \rightarrow \F_2^t\} $ be universal, and let $f$ be a Simon function with period $s$. Then we measure in Algorithm \textsc{Hashed-Simon} in the first $n$ qubits any $y \perp s, y \not=0$ with probability $\frac{1-2^{-t}}{2^{n-1}}$, where the probability is taken over the random choice of $h \in \mathcal{H}_t$.
\end{theorem}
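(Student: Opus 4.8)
The plan is to fix a hash function $h\in\mathcal{H}_t$, compute the probability $p_h(y)$ of measuring a given $y$ in the first $n$ qubits of $Q^{\Simon}_{h\circ f}$, and then average over the random $h$ used in the current iteration of \textsc{Hashed-Simon} (this $h$ is fresh and independent of everything collected so far). By \cref{orthogonality}, after running the circuit the state is $\sum_{z}\sum_{y\perp s} w_{y,z}\ket{y}\ket{z}$ with $w_{y,z}=\frac{1}{2^n}\sum_{x\in f_h^{-1}(z)}(-1)^{xy}$ (and $w_{y,z}=0$ for $z\notin\mathrm{Im}(f_h)$ and for $y\not\perp s$), so $p_h(y)=\sum_{z\in\F_2^t}|w_{y,z}|^2$. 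Expanding the square and using that the fibres $f_h^{-1}(z)$ partition $\F_2^n$, this collapses to a sum over pairs:
\[
  p_h(y)\;=\;\frac{1}{2^{2n}}\sum_{\substack{x,x'\in\F_2^n\\ f_h(x)=f_h(x')}}(-1)^{(x+x')y}.
\]

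Next I would apply $\E_{h\in_R\mathcal{H}_t}$ and interchange it with the finite sum over all pairs $(x,x')$, which replaces the constraint $f_h(x)=f_h(x')$ by the weight $\Pr_h[h(f(x))=h(f(x'))]$. Then split the pairs: if $f(x)=f(x')$ this probability equals $1$, and since $f$ is a Simon function this happens exactly for $x'\in\{x,x+s\}$; otherwise $f(x)\neq f(x')$ and universality of $\mathcal{H}_t$ (\cref{universal}) gives probability exactly $2^{-t}$. On the diagonal $x'\in\{x,x+s\}$ the exponents satisfy $(x+x')y\in\{0,sy\}$, and since $y\perp s$ both terms equal $1$; there are $2\cdot 2^n$ such pairs, so they contribute $\frac{1}{2^{2n}}\cdot 2^{n+1}=\frac{1}{2^{n-1}}$ to $\E_h[p_h(y)]$.

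For the off-diagonal pairs I would write $\sum_{f(x)\neq f(x')}(-1)^{(x+x')y}=\sum_{x,x'\in\F_2^n}(-1)^{(x+x')y}-\sum_{f(x)=f(x')}(-1)^{(x+x')y}$. The first term factors as $\big(\sum_{x}(-1)^{xy}\big)^2$, which vanishes because $y\neq 0$, and the second term is again $2^{n+1}$, so the off-diagonal pairs contribute $\frac{2^{-t}}{2^{2n}}\cdot(-2^{n+1})=-\frac{2^{-t}}{2^{n-1}}$. Adding the two contributions yields $\E_h[p_h(y)]=\frac{1-2^{-t}}{2^{n-1}}$, which is the claim.

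I do not expect a serious obstacle: this is essentially a second-moment computation. The points needing care are (i) using the amplitude of \cref{orthogonality}/Eq.~\eqref{amplitude} with normalization $\tfrac{1}{2^n}$ to form the measurement probability, not the post-measurement amplitude of Eq.~\eqref{amplitude_fix}; (ii) justifying the expectation/sum interchange and the case split through \cref{universal}; and (iii) invoking $y\perp s$ precisely where it is needed (to make the diagonal exponents vanish) and $y\neq 0$ precisely where it is needed (to kill $\big(\sum_x(-1)^{xy}\big)^2$). A useful sanity check is that summing $\E_h[p_h(y)]$ over all $2^{n-1}$ vectors $y\perp s$, including $y=0$ (for which the same computation gives probability $2^{-t}+\frac{1-2^{-t}}{2^{n-1}}$ since $\big(\sum_x(-1)^{xy}\big)^2=2^{2n}$), yields $1$.
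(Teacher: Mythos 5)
Your proof is correct and is essentially the same second-moment computation as the paper's: expand $p_h(y)=\sum_z|w_{y,z}|^2$, use universality to weight the cross-terms by $2^{-t}$, and kill them with the vanishing character sum $\sum_x(-1)^{xy}=0$ for $y\neq 0$. The only cosmetic difference is that you organize the expansion over domain pairs $(x,x')$ and re-derive the cancellation in-line, where the paper works with the image-indexed amplitudes $w_{y,z}$ of the unhashed circuit and quotes \cref{Simon_Sum_Zero} for the same fact.
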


	\begin{proof} 
		From \Cref{orthogonality} in the case $h = \textrm{id}$, we conclude that \textsc{Simon} gives us a superposition
		\begin{equation*}
			\sum_{f(x)\in\textrm{Im}(f)} \sum_{\underset{y\perp s}{y\in \mathbb{F}_2^n}} w_{y,f(x)} \Ket{y} \ket{f(x)}
			\textrm{ where }
			w_{y,f(x)} = \frac{1}{{2^{n}}}\sum_{x\in f^{-1}(f(x))} (-1)^{xy}\;.
		\end{equation*}
		For ease of notation let us denote $z = f(x)$. In particular for $z\notin \textrm{Im}(f)$ we have $w_{y,z}=0$.
		We measure any $y$ with probability $\sum_{z \in \F_2^n} |w_{y,z}|^2$.
		
		Let $p(y)$ denote the probability to measure $y$ in the first $n$ qubits. Since $w_{y,z} \in \R$, we obtain $|w_{y,z}|^2 = w_{y,z}^2$ and hence the identity
		\begin{equation}
			\label{simon-amplitude}
			p(y) = \sum_{z \in \F_2^n} w_{y,z}^2 = \frac{1}{2^{n-1}}\;.
		\end{equation}
		Let us now look at \textsc{Hashed-Simon} with a $t$-bit range hash function $h \in \mathcal{H}_t$. From \cref{orthogonality} we get
		\[
		\ket{\Phi_h}=	\sum_{z' \in \F_2^t} \sum_{\underset{y\perp s}{y\in \mathbb{F}_2^n}} w_{y,z'} \ket{y} \ket{z'}\;.
		\]
		With respect to the amplitudes $w_{y,z}$ of \textsc{Simon} the superposition of \textsc{Hashed-Simon} can be written as
		\[
		\ket{\Phi_h}=	\sum_{z' \in \F_2^t} \sum_{\underset{y\perp s}{y\in \mathbb{F}_2^n}} \left(\sum_{z\in h^{-1}(z')} w_{y,z}\right) \ket{y} \ket{z'}\;,
		\]	
		where
		\[
		\sum_{z\in h^{-1}(z')} w_{y,z}
		= 
		\sum_{z\in h^{-1}(z')}\frac{1}{{2^{n}}}\sum_{x\in f^{-1}(z)}(-1)^{xy}
		= 
		\frac{1}{{2^{n}}}\sum_{x\in f_h^{-1}(z')}(-1)^{xy}
		=
		w_{y,z'}\;.
		\]
		Let us denote by $p_h(y) = \Pr_{h\in\mathcal{H}_t}[y]$ the probability that we measure $y$ in the first $n$ qubits when applying $Q^{\Simon}_{h \circ f}$. Our goal is to show that $p_h(y) = (1-2^{-t})\cdot p(y) = \frac{1-2^{-t}}{2^{n-1}}$.
		
		For some $h \in \mathcal{H}_t$ we denote $I_{h,z'}=\{ z \in f(\F_2^n) \mid h(z)=z'\}$. Since $\mathbin{\dot{\bigcup}}_{z'\in \F_2^t} I_{h, z'} = f(\F_2^n)$ and $w_{y,z} \in \R$, \textsc{Hashed-Simon} yields	
		
		\begin{align}
			p_h(y)
			& = \frac{1}{|\mathcal{H}_t|}\sum_{h\in\mathcal{H}_t}\sum_{z'\in \F_2^t}\left|\sum_{z\in I_{h,z'}} w_{y,z}\right|^2\nonumber \\
			& = \frac{1}{|\mathcal{H}_t|}\sum_{h\in\mathcal{H}_t}\sum_{z'\in \F_2^t}\left(\sum_{z\in I_{h,z'}} w_{y,z}\right)^2\;. 
			\label{cross_g}
		\end{align}
		In Eq.~(\ref{cross_g}) we obtain a cross-product $w_{y,z_1}w_{y,z_2}$ for $z_1 \not=z_2$ iff $z_1, z_2$ are in the same set $I_{h,z'}$, i.e. iff $h(z_1)=h(z_2)$. Using \cref{universal} of a universal hash function family, we obtain $\Pr_{h \in \mathcal{H}_t}[h(z_1)=h(z_2)]=2^{-t}$ for any $z_1 \not= z_2$. This implies that for exactly $2^{-t}$ of all $h \in \mathcal{H}_t$ we obtain $h(z_1)=h(z_2)$. 
		
		Further using $w_{y,z}=0$ for $z \notin f(\F_2^n)$, we conclude that
		\[
		p_h(y) =\sum_{z \in \F_2^n} w_{y,z}^2+2^{-t}\sum_{z_1\not=z_2} w_{y,z_1}w_{y,z_2}\;.
		\]
		From \cref{Simon_Sum_Zero} we know that 
		\[
		0 = 2^{-t} \left( \sum_{z \in \F_2^n} w_{y,z} \right)^2 = 2^{-t} \sum_{z \in \F_2^n} w_{y,z}^2+  2^{-t} \sum_{z_1\not=z_2} w_{y,z_1}w_{y,z_2}\;.
		\] 
		An application of this identity together with Eq.~(\ref{simon-amplitude}) gives us
		\[
		p_h(y) = (1-2^{-t}) \sum_{z \in \F_2^n}  w_{y,z}^2 = (1-2^{-t})\cdot p(y) = \frac{1-2^{-t}}{2^{n-1}} 
		\;.\vspace*{-0.75cm}
		\]	
	\end{proof}

	\begin{corollary}
		We measure in Algorithm \textsc{Hashed-Simon} in the first $n$ qubits $y =0$ with probability 
		$2^{-t}+(1-2^{-t})\cdot2^{1-n}$.
	\end{corollary}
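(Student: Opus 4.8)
The plan is to obtain $p_h(0)$, the probability of measuring $y=0^n$ in the first $n$ qubits over a random choice of $h \in \mathcal{H}_t$, by a complementary counting argument: compute the total probability mass on all \emph{other} outcomes from \cref{orthogonality,measuring_probabilities}, and subtract from $1$.

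First I would recall the structure of the measured distribution. By \cref{orthogonality}, running $Q^{\Simon}_{h\circ f}$ on $\ket{0^n}\ket{0^t}$ yields a superposition supported only on basis states $\ket{y}\ket{z'}$ with $y \perp s$; hence every measurement of the first $n$ qubits returns some $y$ in the hyperplane $s^\perp \subseteq \F_2^n$, which contains exactly $2^{n-1}$ vectors since $s \neq 0^n$. By \cref{measuring_probabilities}, each of the $2^{n-1}-1$ nonzero vectors $y \in s^\perp$ is measured with probability $\frac{1-2^{-t}}{2^{n-1}}$. Since the $2^{n-1}$ outcomes in $s^\perp$ (namely $y=0^n$ together with these nonzero vectors) are the only ones with nonzero probability, the probabilities sum to one and give
\[
  p_h(0) = 1 - (2^{n-1}-1)\cdot\frac{1-2^{-t}}{2^{n-1}} = 1 - (1-2^{1-n})(1-2^{-t}).
\]
Expanding the product and collecting terms yields $p_h(0) = 2^{-t} + (1-2^{-t})\cdot 2^{1-n}$, the claimed value.

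There is essentially no obstacle; the only point needing a word of care is why $y=0^n$ does not simply inherit the formula $p_h(y)=(1-2^{-t})\sum_z w_{y,z}^2$ from the proof of \cref{measuring_probabilities}. The reason is that that formula relied on \cref{Simon_Sum_Zero}, i.e. on $\sum_z w_{y,z}=0$, which holds only for $y\neq 0$. For completeness I would include the direct check: for $y=0^n$ one has $w_{0^n,z}=|f^{-1}(z)|/2^n = 2^{1-n}$ for each of the $2^{n-1}$ values $z\in\mathrm{Im}(f)$ and $w_{0^n,z}=0$ otherwise, so $\sum_z w_{0^n,z}=1$ and $\sum_z w_{0^n,z}^2 = 2^{1-n}$; plugging these into the identity $p_h(0)=\sum_z w_{0^n,z}^2 + 2^{-t}\bigl((\sum_z w_{0^n,z})^2 - \sum_z w_{0^n,z}^2\bigr)$ from the proof of \cref{measuring_probabilities} gives $p_h(0)=2^{1-n}+2^{-t}(1-2^{1-n})=2^{-t}+(1-2^{-t})\cdot 2^{1-n}$, matching the complementary-counting answer.
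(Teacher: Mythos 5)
Your proof is correct and matches the intended derivation: the paper states this corollary without proof precisely because it follows from \cref{orthogonality} (all probability mass lies on $s^\perp$, which has $2^{n-1}$ elements) and \cref{measuring_probabilities} by complementary counting, exactly as you argue. Your supplementary direct check via the identity $p_h(0)=\sum_z w_{0^n,z}^2 + 2^{-t}\sum_{z_1\neq z_2}w_{0^n,z_1}w_{0^n,z_2}$ is also valid, since that identity is derived without using $y\neq 0$ and only the final appeal to \cref{Simon_Sum_Zero} requires it.
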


\begin{theorem}\label{main_theorem_simon}
	Let $\mathcal{H}_t=\{h: \F_2^n \rightarrow \F_2^t\} $ be universal, and let $f:\F_2^n \rightarrow \F_2^n$ be a Simon function with period $s \in \F_2^n$.  \textsc{Hashed-Simon} recovers $s$ with expected $\frac{n+1}{1-2^{-t}}$ applications of quantum circuits $Q^{\Simon}_{h \circ f}$, $h \in_R \mathcal{H}_t$, that use only $n+t$ qubits. 
\end{theorem}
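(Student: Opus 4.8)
The plan is to reduce the theorem to a standard basis‑collection (coupon‑collector) argument, with \Cref{measuring_probabilities} supplying the one nontrivial ingredient. First I would record the behaviour of a single iteration of \textsc{Hashed-Simon}: by \Cref{measuring_probabilities}, one application of $Q^{\Simon}_{h\circ f}$ with a fresh $h\in_R\mathcal{H}_t$ on input $\ket{0^n}\ket{0^t}$ measures each nonzero $y\in s^\perp$ with probability $\frac{1-2^{-t}}{2^{n-1}}$ and puts the remaining mass on $y=0^n$; and this distribution is the same in every iteration and independent of the set $Y$ accumulated so far, since each iteration restarts from $\ket{0^n}\ket{0^t}$ with its own independently chosen $h$. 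Moreover, by \Cref{orthogonality} every measured $y$ lies in $s^\perp$, so $\Span(Y)\subseteq s^\perp$ throughout the run.

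Next I would bound the expected number of iterations by linearity of expectation over phases. Say the loop is in \emph{phase $k$} while $Y$ consists of $k$ linearly independent vectors, so $\Span(Y)$ has $2^k$ elements, all lying in $s^\perp$. A fresh measurement strictly increases $\dim\Span(Y)$ exactly when it falls in $s^\perp\setminus\Span(Y)$, a set of $2^{n-1}-2^k$ vectors, all of them nonzero; hence a phase‑$k$ iteration is successful with probability $(2^{n-1}-2^k)\cdot\frac{1-2^{-t}}{2^{n-1}}=(1-2^{-t})(1-2^{k+1-n})$, and successive iterations are independent. Thus the number of iterations spent in phase $k$ is geometric with this parameter, and summing the expected waiting times over the phases $k=0,\dots,n-2$ the algorithm performs an expected total of
\[
  \sum_{k=0}^{n-2}\frac{1}{(1-2^{-t})(1-2^{k+1-n})}
  \;=\;\frac{1}{1-2^{-t}}\sum_{j=1}^{n-1}\frac{1}{1-2^{-j}}
\]
applications of $Q^{\Simon}_{h\circ f}$.

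To finish the counting I would bound the inner sum using $\frac{1}{1-2^{-j}}=1+\frac{1}{2^{j}-1}$, which gives $\sum_{j=1}^{n-1}\frac{1}{1-2^{-j}}=(n-1)+\sum_{j=1}^{n-1}\frac{1}{2^{j}-1}<(n-1)+\sum_{j\ge 1}\frac{1}{2^{j}-1}<(n-1)+2=n+1$, since $\sum_{j\ge1}\frac{1}{2^{j}-1}=1+\tfrac13+\tfrac17+\cdots<2$. Hence the expected number of circuit applications is at most $\frac{n+1}{1-2^{-t}}$. Correctness then follows as in the original analysis: once $Y$ holds $n-1$ linearly independent vectors, $Y^\perp$ is one‑dimensional, and since every $y\in Y$ satisfies $y\perp s$ (by \Cref{orthogonality}) we get $Y^\perp\setminus\{0^n\}=\{s\}$, which Gaussian elimination returns in time $\bigO(n^3)$. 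The qubit bound is immediate from the circuit: $Q^{\Simon}_{h\circ f}$ acts on $n$ input qubits and on $t$ output qubits holding $\ket{h(f(x))}\in\F_2^t$, for $n+t$ qubits in total.

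The only step I expect to require genuine care is the memorylessness used in the phase argument: because every iteration draws a fresh $h$ and starts from $\ket{0^n}\ket{0^t}$, the measured vectors are i.i.d.\ draws from the \Cref{measuring_probabilities} distribution regardless of history, so each phase is genuinely geometric and linearity of expectation applies cleanly. The accompanying minor observation is that measuring $y=0^n$ merely wastes one iteration but never corrupts $Y$, since $0^n\in\Span(Y)$ always. Everything else is the textbook expected time to assemble a basis of an $(n-1)$‑dimensional space from uniform random vectors, rescaled by the factor $\frac{1}{1-2^{-t}}$ contributed by the hashing.
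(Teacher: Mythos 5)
Your proposal is correct and follows essentially the same route as the paper's proof: decompose the run into phases indexed by $\dim\Span(Y)$, use \cref{measuring_probabilities} to get the per-phase success probability $(1-2^{-t})\cdot\frac{2^{n-1}-2^{k}}{2^{n-1}}$, sum the expectations of the resulting geometric waiting times, and bound the residual sum $\sum_{j\ge1}\frac{1}{2^j-1}$ (you bound it by $2$; the paper uses the sharper value $\le 1.6067$, but both yield the claimed $\frac{n+1}{1-2^{-t}}$). No substantive differences.
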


	\begin{proof}
		Let us define a random variable $X_i$, $1 \leq i < n$ for the number of applications of $Q^{\Simon}_{h \circ f}$ until \textsc{Hashed-Simon} finds $i$ linearly independent $y_1 \ldots y_i$. Let $E_i$ be the event that we already have $i-1$ linearly independent $Y=\{y_1, \ldots, y_{i-1}\}$ and we measure some $y_i \notin \textrm{span}(Y)$. Define $p_i = \Pr[E_i]$. Using \cref{measuring_probabilities}, we obtain 
		\[
		p_1 =(1-2^{-t})\cdot  \frac{2^{n-1}-1}{2^{n-1}}. 
		\]
		Since $\left\vert\textrm{span}\{y_1 \ldots, y_{i-1}\}\right\vert=2^{i-1}$, we obtain from  \cref{measuring_probabilities} more generally 
		\[
		p_i =(1-2^{-t})\cdot  \frac{2^{n-1} - 2^{i-1}}{2^{n-1}}.
		\]
		Clearly, $X_i$ is geometrically distributed with parameter $p_i$. Let\linebreak $X=X_1+ \ldots + X_{n-1}$ denote the number of required applications of $Q^{\Simon}_{h \circ f}$ in \textsc{Hashed-Simon}. Then
		\begin{align*}
			\E[X]  	& = \sum_{i=1}^{n-1} \E[X_i] = \sum_{i=1}^{n-1}\frac{1}{(1-2^{-t})}\cdot  \frac{2^{n-1}}{2^{n-1}-2^{i-1}} 
			= \sum_{i=1}^{n-1} \frac{1}{(1-2^{-t})}\cdot  \frac{2^{n-1}-2^{i-1}+2^{i-1}}{2^{n-1}-2^{i-1}}  \\
			& = \frac{1}{(1-2^{-t})}\cdot (n-1) + \frac{1}{(1-2^{-t})}\cdot \sum_{i=1}^{n-1} \frac{2^{i-1}}{2^{n-1}-2^{i-1}}.
		\end{align*}
		Since $\lim_{n \to \infty} \sum_{i=1}^{n-1} \frac{2^{i-1}}{2^{n-1}-2^{i-1}} \leq 1.6067$, the claim follows.
	\end{proof}	

\begin{remark}
	With a similar analysis as in the proof of \cref{main_theorem_simon}, we obtain an upper bound of $n + 1$ for the expected number of applications of $Q^{\Simon}_{f}$ in \textsc{Simon}'s original algorithm.
\end{remark}

%
%

\section{Simon Attack on the Even-Mansour Construction}\label{sec:even_mansou}

The famous Even-Mansour construction~\cite{DBLP:journals/joc/EvenM97,DBLP:conf/eurocrypt/DunkelmanKS12} is an appealingly simple way of constructing a keyed pseudo-random permutation from an unkeyed public permutation $P:\F_2^n\to\F_2^n$ via 
\[
	\EM_{k}: \F_2^n \rightarrow \F_2^n, \quad x \mapsto P( x + k ) + k\;.
\] 
As shown by Even, Mansour~\cite{DBLP:journals/joc/EvenM97} and Dunkelman et al~\cite{DBLP:conf/eurocrypt/DunkelmanKS12}, the function $\EM_{k}$ offers quite strong security guarantees against classical adversaries.

However, Kuwakado and Morii \cite{DBLP:conf/isita/KuwakadoM12} and Kaplan et al~\cite{DBLP:conf/crypto/KaplanLLN16} showed that Even-Mansour is completely insecure against quantum superposition attacks. The key observation is that the function
\[ f: \F_2^n \rightarrow \F_2^n, \quad x \mapsto P(x) + \EM_{k}(x) = P(x) + P( x + k )  + k \]
satisfies $f(x) = f(x+k)$. Thus, an application of Simon's algorithm reveals as period the secret key $k$. This requires $2n$ qubits using \textsc{Simon}. Let $U_P$ and $U_{\EM_{k}}$ be quantum embeddings of $P$ and $\EM_{k}$. The quantum circuit for the attack is depicted in \cref{circuit:EM}.

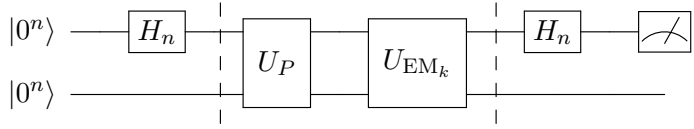
\begin{figure}[htb]
	\centering
	\mbox{
		\Qcircuit @C=1em @R=1em {
			\lstick{\ket{0^n}}		& 
			\qw 				 	& \gate{H_n}& \qw\barrier[-0.75cm]{1}  	& \multigate{1}{U_{P}} 	& \qw & \multigate{1}{U_{\EM_{k}}} & \qw\barrier[-0.75cm]{1} & \gate{H_n} 	& \qw 					 & \meter\\
			\lstick{\ket{0^n}}        & 
			\qw 					& \qw 		& \qw 						& \ghost{U_{P}}	& \qw & \ghost{U_{\EM_{k}}}& \qw 				   	 & \qw 			& \qw					 & \qw 
		}
	}
	\caption{Quantum circuit for \textsc{Simon}-attack on Even-Mansour.}\label{circuit:EM}
\end{figure}

\subsection{Directly Realizing Hashed Even-Mansour: \textsc{Simon} Attack with $n+1$ qubits  }

From \cref{main_theorem_simon} we immediately conclude that we obtain a \textsc{Simon}-attack with $n+1$ qubits using oracle access to hashed versions of $f$. In the following we show that we can directly (without oracles) construct hashed versions from $P$ and $\EM_{k}$. 

First observe that $P$ and $\EM_k$ are reversible functions $\F_2^n \rightarrow \F_2^n$, and thus may allow for direct quantum circuits that compute the function values of $P$, $\EM_k$ on the $n$ input qubits, without using the generic universal quantum embedding strategy. Let us assume for the moment that we are able to construct for $P$ an  {\em in-place} quantum circuit $Q_P$, i.e. a circuit that acts on the $n$ input qubits only. We show in the following that some natural choices for $P$ in the Even-Mansour construction allow for such in-place realizations. 

If $P$ can be realized via $Q_P$ in-place, then we can also realize $\EM_k$ in-place via some circuit $Q_{\EM_k}$, where $Q_{EM_k}$ just uses $Q_P$ and adds in the key $k$ (hardwired). Running the implementations of $Q_P$, $Q_{\EM_k}$ backwards realizes the inverse functions $P^{-1}, \EM_k^{-1}$. We denote the corresponding circuits by $Q_P^{-1}$ and $Q_{\EM_k}^{-1}$. Notice that our hash method comes at the cost of doubling the circuit depth.

We take the following universal hash family from \cref{eq:hash} with $t=1$
\[ 
\mathcal{H}_1=\{h_r:\F_2^n\to\F_2 \;|\; r\in \F_2^n,\ h_r(x)=\langle x , r \rangle \}.
\] 
Notice that $\mathcal{H}_1$ is homomorphic, i.e. for all $h_r \in \mathcal{H}_1$ we have $h_r(x) + h_r(y) = h_r(x+y)$ by linearity of the inner product. As usual, we denote by $U_{h_r}$ the universal embedding of $h_r$.

The $n+1$ qubit quantum circuit $Q_{HS}$ in \cref{circuit:HEM} describes a \textsc{Hashed-Simon} attack on Even-Mansour {\em without} the need for oracle access to hashed versions for $f$.

\begin{figure}[htb]
	\centering
	\includegraphics[width=\textwidth]{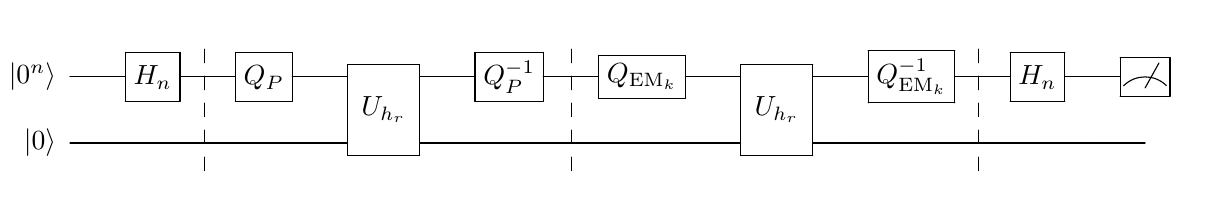}
	\caption{Quantum circuit $Q_{\textsc{HS}}$ for a \textsc{Hashed-Simon}-Attack on Even-Mansour}\label{circuit:HEM}
\end{figure}

\noindent In \cref{circuit:HEM} we compute on the single output qubit of $Q_{\textsc{HS}}$ 
\begin{align*}
h_r(P(x)) + h_r(\EM_k(x)) & = \langle P(x),r\rangle + \langle \EM_k(x),r\rangle = \langle P(x) + \EM_k(x) ,r\rangle  \\
 & = h_r(P(x) + \EM_k(x)) = h_r(f(x)).
\end{align*}
Thus, the correctness of our construction follows. 

It remains to show that we can compute $P$ in-place.

\paragraph*{In-place realization of $P$.}

There exist many lightweight permutations such as Gimli \cite{DBLP:conf/ches/BernsteinKLMMN017} that allow for (quantum) hardware-efficient implementations, for a list of candidates see the current second round NIST competition~\cite{NIST} or the work of Bonnetain and Jaques~\cite{DBLP:journals/corr/abs-2011-07022}. 
For didactical reasons -- since it is especially easy to describe and implement in-place -- we choose the SiMeck cipher~\cite{DBLP:conf/ches/YangZSAG15}, for  which we fix the key $k' \in \F_2^{n/2}$ to obtain a public permutation $P$, as also done in ACE \cite{ACE}. 

SiMeck is a round-iterated Feistel cipher, see \cref{fig:feistel}
for one Feistel round ${\cal F}_{k'}: \F_2^n \rightarrow \F_2^n$. 

\begin{figure}[htb]
	\centering
	\begin{tikzpicture}[auto,transform shape]
		\tikzstyle{block} = [rectangle, draw, thick,
		text width=3em, text centered, minimum height=3em]
		\tikzstyle{line} = [draw, -]
	
		\node [block] (Et) {$F_{k'}$};
		\coordinate [above of=Et, node distance=1cm] (top);

		\node [left of=top, node distance=2cm] (m1) {$x_0 \ldots x_{\frac{n}{2}-1}$};
		\node [right of=top, node distance=2cm] (m2) {$x_\frac{n}{2} \ldots x_{n-1}$};

		\coordinate [left of=Et, node distance=2cm] (midleft);
		\node [right of=Et, node distance=2cm, draw, circle] (xor) {};
		
		\draw[-] (xor.north) -- (xor.south);
		\draw[-] (xor.east) -- (xor.west);
		
		\coordinate [left  of=xor, node distance=1mm] (xorl);
		\coordinate [above of=xor, node distance=1mm] (xora);
		\coordinate [right of=xor, node distance=1mm] (xorr);
		\coordinate [below of=xor, node distance=1mm] (xorb);
	
		\coordinate [below of=midleft, node distance=.75cm] (botleft);
		\coordinate [below of=botleft, node distance=.75cm] (bumleft);
		\coordinate [right of=botleft, node distance=4cm] (botright);
		\coordinate [right of=bumleft, node distance=4cm] (bumright);
		
		\node [below of=m1, node distance=3.5cm] (c1) {${\cal F}_{k'}(x)_0 \ldots {\cal F}_{k'}(x)_{\frac{n}{2}-1}$};
		\node [right of=c1, node distance=4cm] (c2) {${\cal F}_{k'}(x)_\frac{n}{2},\ldots,{\cal F}_{k'}(x)_{n-1}$};
	
		\path [line] (m1) -- (midleft);
		\path [line] (midleft) -- (botleft);
		\path [line] (botleft) -- (bumright);
		\draw[-latex] (bumright) -- (c2);
	
		\path [line] (m2) -- (xor.north);
		\path [line] (xor.south) -- (botright);
		\path [line] (botright) -- (bumleft);
		\draw[-latex] (bumleft) -- (c1);
		
		\draw[-latex] (midleft) -- (Et);
		\draw[-latex] (Et) -- (xor.west);	
	\end{tikzpicture}
	\caption{One round of ${\cal F}_{k'}$ of SiMeck, where $F_{k'}: \F_2^{\frac n 2} \rightarrow \F_2^{\frac n 2}$ is the round function.}\label{fig:feistel}
\end{figure}

\noindent Let $x = x_0 \ldots x_{n-1}$. Then the $i^{th}$ bit of ${\cal F}_{k'}(x)$ is
\[
  {\cal F}_{k'}(x)_i = 
\begin{cases}  
x_{i-n/2} & \textrm{if } i \geq n/2 \\
F_{k'}(x_0, \ldots, x_{n/2-1})_i + x_{i + n/2} & \textrm{if } i < n/2
\end{cases}.
\]
The round function $F_{k'}: \F_2^{n/2} \rightarrow \F_2^{n/2}$ in SiMeck is defined as  
\[
F_{k'}(x)_i = x_{i}\cdot x_{i+5 \bmod n/2} + x_{i+1 \bmod n/2} + k'_i \ .
\]
We implement ${\cal F}_{k'}$ in-place as depicted in \cref{fig:simeck_cipher_iteration}. In our quantum circuit the AND-operation is realized by a Toffoli gate. Conditioned on $k'_i=1$ we place a NOT-gate $X$ (i.e. we hardwire $k_i'$).

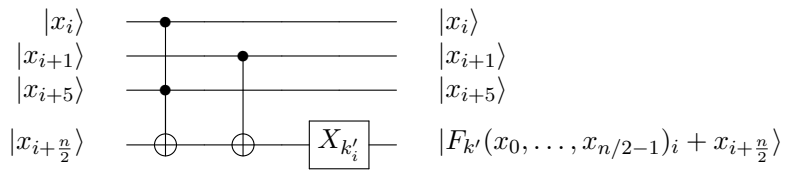
\begin{figure}[htb]
	\centering
	\scalebox{0.95}{
		\Qcircuit @C=1em @R=1em {
			\lstick{\ket{x_{i}}}					&& \ctrl{3}	& \qw	& \qw		& \qw	& \qw 			& \qw 	& \rstick{\ket{x_{i}}}\\		
			\lstick{\ket{x_{i+1}}}					&& \qw		& \qw	& \ctrl{2}	& \qw	& \qw 			& \qw	& \rstick{\ket{x_{i+1}}}\\
			\lstick{\ket{x_{i+5}}}					&& \ctrl{1}	& \qw	& \qw		& \qw	& \qw 			& \qw	& \rstick{\ket{x_{i+5}}}\\	
			\lstick{\ket{x_{i+\frac{n}{2}}}}		&& \targ	& \qw	& \targ	 	& \qw	& \gate{X_{k'_i}}& \qw	& \rstick{\ket{F_{k'}(x_0, \ldots, x_{n/2-1})_i + x_{i+\frac{n}{2}}}}
		}
	}
	\caption{In-place realization of ${\cal F}_{k'}(x)_i$, $i<\frac{n}{2}$. If $k'_i=1$ we place a NOT-gate $X$.}\label{fig:simeck_cipher_iteration}
\end{figure}

Going from the public permutation $P$ realized via SiMeck to Even-Mansour $\EM_{k}$, we also hardwire the bits of $k$ via NOT-gates, see \cref{fig:even_mansou}.

\begin{figure}[htb]
	\centering
	\mbox{
		\Qcircuit @C=1em @R=1em {
			\lstick{\ket{x_0}} 		 						&& \qw  & \gate{X_{k_0}}				& \multigate{2}{Q_P}	& \gate{X_{k_0}}			& \qw 	& \rstick{\ket{\EM(x)_0}}\\
			\lstick{\raisebox{.2cm}{\vdots}\hspace{0.5cm}} 	&&   	& \raisebox{.2cm}{\vdots}		&  						& \raisebox{.2cm}{\vdots}	&		& \rstick{\hspace{0.7cm}\raisebox{.2cm}{\vdots}}	\\
			\lstick{\ket{x_{n-1}}}	 						&& \qw  & \gate{X_{k_{n-1}}}			& \ghost{Q_P}		 	& \gate{X_{k_{n-1}}}		& \qw 	& \rstick{\ket{\EM(x)_{n-1}}}
		}
	}
	\caption{Quantum circuit for Even-Mansour. We place a NOT-gate $X$ if $k_i=1$.}
	\label{fig:even_mansou}
\end{figure}
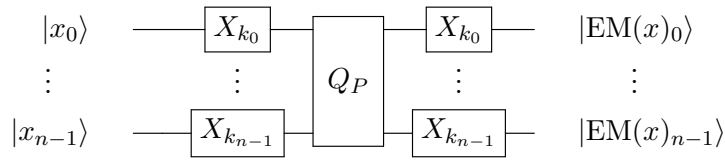
\section{Hashing Offline Even-Mansour to a Quarter of its Bits}\label{sec:offline_simon}

While Simon's attack on the Even-Mansour cipher 
\[
\EM_k: \F_2^n \rightarrow \F_2^n, \ x \mapsto P(x+k)+k 
\] 
from \cref{sec:even_mansou} with only $\bigO(n)$ queries nicely illustrates the power of quantum computations on symmetric cryptography, it also uses a strong model giving an attacker full {\em quantum access} to $\EM_k$. 

Recently, Bonnetain et al~\cite{DBLP:conf/asiacrypt/BonnetainHNSS19} proposed an algorithm called \textsc{Offline-Simon} that in a more realistic  model, where an attacker gets only {\em classical access} to $\EM_k$, achieves a {\em polynomial} speedup over classical attacks. More precisely, the Even-Mansour attack with \textsc{Offline-Simon} runs in time $\tilde{\bigO}\left(2^{\frac{n}{3}}\right)$ using $\frac{4}{3}cn^2 + o(n^2)$ qubits, for some constant $c$ (chosen as $c=\frac 5 3$ in~\cite{DBLP:conf/asiacrypt/BonnetainHNSS19}). This qubit analysis uses the very mild assumption that Even-Mansour's public permutation $P: \F_2^n \rightarrow \F_2^n$ can be implemented with $o(n^2)$ qubits. Efficiently computable $P$'s are usually computable with $\bigO(n)$ qubits.  

We show in the following that our hashing technique reduces the number of required qubits to only $\frac{1}{3}cn^2 + o(n^2)$, thereby saving roughly a factor of $4$. 

For simplicity of exposition we explain the \textsc{Offline-Simon} technique only when applied to Even-Mansour. Moreover, we ignore  the fact that Even-Mansour is not a perfect 2:1-function, which only insignificantly affects the analysis as shown in \cite{DBLP:conf/crypto/KaplanLLN16,DBLP:journals/qic/SantoliS17,DBLP:conf/asiacrypt/Leander017}.

In general, \textsc{Offline-Simon} has more applications such as the FX-construction attack~\cite{DBLP:conf/asiacrypt/Leander017}, and our hashing technique transfers to these applications as well. 
For instance for the FX-construction we save via hashing a factor of $2$ in the number of qubits. 

\paragraph*{Offline-Simon.} 

Recall from \cref{sec:even_mansou} that the main observation of the quantum attack on Even-Mansour is that the function $\EM_k(x) + P(x)$ is periodic with period $k$. In this function only $\EM_k$ is key-dependent. The idea of \textsc{Offline-Simon} is to define a key-dependent function
\[
 g:\F_2^{\frac{n}{3}}\to\F_2^n , \ x \mapsto \EM_k(x||0^{\frac{2n}{3}}), 
\]
for which we only have classical access, and a key-independent function with quantum access
\[
 f_{k'}:\F_2^{\frac{n}{3}}\to\F_2^n, \ x \mapsto P(x||k') \textrm{ for some } k'\in\F_2^\frac{2n}{3}.
\]
We write the secret key as $k=(k_1 || k_2)\in\F_2^{\frac{n}{3}}\times\F_2^{\frac{2n}{3}}$. Notice that 
\begin{align*}
g(x) + f_{k_2}(x) & =  \EM_{(k_1||k_2)}(x||0^{\frac{2n}{3}}) + P(x || k_2)  \\
& = P(x + k_1||k_2) + k + P(x || k_2)
\end{align*}
is periodic in $k_1 \in \F_2^{\frac n 3}$. As opposed to \cref{sec:even_mansou} instead of period $k$ we obtain period $k_1$, thereby reducing the period length from $n$ to $\frac n 3$ bits. However, $g + f_{k'}$ is periodic only for the choice $k'=k_2$, otherwise it behaves like a random function (by the property of $P$). Therefore, \textsc{Offline-Simon}  searches for $k'=k_2$ with a Grover search in time $\tilde{\bigO}(2^{\frac n 3})$,  using a  Grover function
that returns $1$ iff $g + f_{k'}$ is periodic. 
Such a function has been designed by Leander and May~\cite{DBLP:conf/asiacrypt/Leander017}.

In a nutshell, \textsc{Offline-Simon} proceeds as follows. Let $c$ be a small constant. Using $2^{n/3}$ classical queries to $g$ we determine $g(x)$ for all $x \in \F_2^{\frac n 3}$. We then build the quantum state 
\[ \ket{\Phi_g}:=\bigotimes_{j=1}^{cn}\left(\sum_{x_j \in \F_2^{\frac n 3}}\ket{x_j}\ket{g(x_j)}\right) .
\]
Using $\bigO(n)$ quantum queries to $f_{k'}$ we construct the superposition
\[
\ket{\Phi_{g+f_{k'}}}:=\bigotimes_{j=1}^{cn}\left(\sum_{x_j \in \F_2^{\frac n 3}}\ket{x_j}\ket{(g+f_{k'})(x_j)}\right).
\]
Eventually, we use Hadamard on the $\ket{x_j}$'s to obtain $cn$ copies of a typical \textsc{Simon} superposition
\begin{equation}\label{eq:phi}
\ket{\Phi} = \left(\sum_{x_1,y_1 \in \F_2^{\frac n 3}}(-1)^{x_1y_1}\ket{y_1}\ket{(g+f_{k'})(x_1)}\right) \otimes \ldots \otimes \left(\sum_{x_{cn},y_{cn}\in \F_2^{\frac n 3}}(-1)^{x_{cn}y_{cn}}\ket{y_{cn}}\ket{(g+f_{k'})(x_{cn})}\right)\;.
\end{equation}

The state $\ket{\Phi}$ is checked for periodicity of $g+f_{k'}$. Notice that $\ket{\Phi}$ has $cn$ copies of a state with $\frac n 3$ input qubits for $x$ and $n$ output qubits for $(g+f_{k'})(x)$. Thus ignoring low order terms we need $\frac 4 3 cn^2$ qubits.

\paragraph*{Hashed-Offline-Simon.} Let us use our $t$-output bit homomorphic hash function family from \cref{eq:hash}
\[
\mathcal{H}_t=\{h_r:\F_2^n\to\F_2^t \;|\; r\in \left(\F_2^n\right)^t,\ h_r(x)=(\langle x , r_1 \rangle,\ldots,\langle x , r_t \rangle) \}\;. 
\]
Let $c'$ be a small constant. In the following we see how $c'$ relates to $c$ from \textsc{Offline-Simon}.

Choose $h_j \in_R \mathcal{H}_t$ for $1 \leq j \leq c'n$. As in \textsc{Offline-Simon}, we first classically query on-the-fly for all $x_j $  the value $g(x_j )$ and compute $h_j(g(x_j ))$ for every $j$ to create
%
\[ \ket{\Phi_{h \circ g}}:=\bigotimes_{j=1}^{c'n}\left(\sum_{x_j \in \F_2^{\frac n 3}}\ket{x_j}\ket{h_j(g(x_j ))}\right) .
\]
Second, using $\bigO(n)$ quantum queries to $h_j \circ f_{k'}(x_j ) = h_j(P(x_j ||k'))$ (a combination of each $h_j$ with the quantum circuit for $f_{k'}$) we construct\footnote{
This can be done by computing iteratively $f_{k'}(x_j)$, hashing it with $h_j$, and uncomputing $f_{k'}(x_j)$ to reuse the qubits 
for $f_{k'}(x_{j+1})$. Using this iterative procedure we only need once instead of $n$ times the qubits for representing $f_{k'}$.
} 
the superposition
\[
\ket{\Phi_{h\circ (g+ f_{k'})}}:=\bigotimes_{j=1}^{c'n}\left(\sum_{x_j  \in \F_2^{\frac n 3}}\ket{x_j }\ket{(h_j \circ g+ h_j \circ f_{k'})(x_j )}\right).
\]
By the homomorphic property of $h_j \in \mathcal{H}_t$ we have $h_j \circ g+ h_j \circ f_{k'} = h_j \circ (g+f_{k'})$. 

Eventually, Hadamard on $\ket{x}$ creates $c'n$ copies of a \textsc{Simon} superposition
\begin{equation}\label{eq:phih}
	\begin{split}
		\ket{\Phi_h} =& \left(\sum_{x_1,y_1\in \F_2^{\frac n 3}}(-1)^{x_1y_1}\ket{y_1}\ket{h_1 \circ (g+f_{k'})(x_1)}\right)\otimes \ldots  \\
		&\otimes \left(\sum_{x_{c'n},y_{c'n}\in \F_2^{\frac n 3}}(-1)^{x_{c'n}y_{c'n}}\ket{y_{c'n}}\ket{h_{c'n} \circ (g+f_{k'})(x_{c'n})}\right)\;.
	\end{split}
\end{equation}

Thus, in contrast to \textsc{Offline-Simon} we hash the $n$ output qubits to $t$ output qubits. 
However, we have to chose $t$ with some care. E.g. taking $t=1$ minimizes the number of bits per copy from $\frac 4 3 n$ to only $\frac 1 3 n + 1$. However, 
by \cref{measuring_probabilities} the choice $t=1$ results in (roughly) half the $y_i$'s being zero, which in turn forces us to set $c'\ge2c$. Thus, in total we obtain at least $2cn (\frac 1 3 n + 1)$ qubits instead of $cn(\frac 4 3 n)$, saving at most a factor of (roughly) $2$. We will show in  \cref{theo:hashEM} that the choice $t = \log_2 n$ saves us a factor of (roughly) $4$ by reducing the bits per copy to (the achievable minimum) $\frac 1 3 n+o(n)$ while increasing $cn$ insignificantly to only $c'n=cn + o(n)$.

\begin{theorem}
\label{theo:hashEM}
    A \textsc{Hashed-Offline-Simon} attack with only classical access to the Even Mansour cipher $\EM_k$ computes $k$ in time $\tilde{\bigO}\left(2^{\frac{n}{3}}\right)$ using $\frac 5 {9} {n^2} + o(n^2)$ qubits, instead of $\frac{20}{9} n^2 + o(n^2)$ qubits for the non-hashed version.
\end{theorem}

\begin{proof}
	We know from~\cite{DBLP:conf/asiacrypt/BonnetainHNSS19} and the above discussion that the non-hashed version \textsc{Offline-Simon} works for the choice $c = \frac 5 3$ in time  $\tilde{\bigO}\left(2^{\frac{n}{3}}\right)$ using $\frac{4}{3}cn^2 + o(n^2)=\frac{20}{9}n^2 + o(n^2)$ qubits, where $o(n^2)$ also accounts for the required ancilla qubits.
	
	We choose $t=\log_2(n)$ for the hash length of our family $\mathcal{H}_t$.
	We measure each $y_j$ in $\ket{\Phi}$ from \cref{eq:phi} with probability $\frac{1}{2^{n/3-1}}$.
	By \cref{measuring_probabilities}, we measure each $y_j \not=0$ in $\ket{\Phi_h}$ from \cref{eq:phih} with probability $\frac{1-2^{-t}}{2^{n/3-1}}$.
	This implies that conditioned on measuring $y_j\not= 0$, the probability distribution of all $y_j$ is preserved.
	
	Let $X_i$ be a Bernoulli random variable that takes value $1$ iff we measure $y_i=0$.  We have $p:=\mathbb{P}[X_i=1] = \frac{1}{n}+\frac{n-1}{n\cdot 2^{n/3-1}} <\frac{2}{n}$ for sufficiently large $n$.
	Let 
	\[ m = \frac {n}{2 \ln(n)}. 
	\]
	We choose $c'n=cn + m$ in \textsc{Hashed-Offline-Simon}. 
	
	Let $\texttt{BAD}$ be the  event  that less than $cn$ of our $c'n$ measurements are non-zero vectors $y_i\not = 0$. Notice that in the event $\texttt{BAD}$ we do not have sufficiently many non-zero vectors for \textsc{Hashed-Offline-Simon}. Let $X = \sum_{i=1}^{c'n} X_i$ be a random variable for the number of $0$-measurements. Then $\mu := \E[X] = pc'n < 2c + \frac{2m}{n} < 4$ for sufficiently large $n$.

	Application of a Chernoff bound yields
	\begin{align*}
		\mathbb{P}[\texttt{BAD}]
		&=\mathbb{P}\left[X \geq c'n-cn\right]
		= \mathbb{P}\left[X \ge m \right] = \mathbb{P}\left[X \ge \left(1 + \left(\frac m {\mu } -1 \right) \right) \cdot \mu \right]  \\
		& \leq \left( \frac{e^{m/\mu -1}}{(m/\mu)^{m/\mu}}\right)^{\mu}  = e^{m - \mu - \ln(m/\mu) \cdot m }  = e^{-n/2 + o(n)} < 2^{-n/3}\;,
	\end{align*}
	for sufficiently large $n$.
	
	\textsc{Hashed-Offline-Simon} requires a total of $\bigO\left(2^{n/3}\right)$ iterations. Thus, we upper bound the iterations by $\ell \cdot 2^{n/3}$ for some constant $\ell$.
	For sufficiently large $n$, we may lower bound the probability to obtain in each of the iteration at least $cn$ vectors $y_j\not=0$ by
	\begin{align*}
		\mathbb{P}[\overline{\texttt{BAD}}]^{\ell \cdot 2^{n/3}}
		=\left(1-\mathbb{P}[\texttt{BAD}]\right)^{\ell \cdot 2^{n/3}} = \left( (1-2^{-n/3})^{2^{n/3}} \right)^\ell
		\ge\left( \frac 1 2 e^{-1} \right)^{\ell} = \Omega(1).
	\end{align*}
	In $\ket{\Phi_h}$ we obtain $c'n$ copies of $n/3 + t$ and therefore a total qubit amount of 
	\[
	c'n \cdot \left(\frac n 3 + \log_2 n\right) 
	=\left(cn + o(n)\right)\left(\frac n 3 + o(n)\right)
	= \frac{5}{9} n^2 + o(n^2).
	\]
\end{proof}

\medskip

%
%

\section{Hashed Shor: Special Periods}
\label{hashed_shor}


Let us briefly recall Shor's algorithm.  Let  $f: \Z \rightarrow \Z$ be periodic with period $d \in \N$, i.e. $d>0$ is minimal with the property $f(x) = f(x+d)$ for all $x \in \Z$.  For ease of notation, let us first focus on applying Shor's algorithm for factorization. In \cref{ekera_sec} we will also see an application for discrete logarithms. 

Let $N \in \N$ be a composite $n$-bit number of unknown factorization, and let $a$ be chosen uniformly at random from $\Z_N^*$, the multiplicative group modulo $N$. Let us define the function  
$f: \Z \rightarrow \Z_N$, $x \mapsto a^x \bmod N$. Notice that $f$ is periodic with $d=\textrm{ord}_N(a)$, since $f(x+d) = a^{x+d} = a^x a^{\textrm{ord}(a)} = a^x = f(x)$. It is well-known that we can compute a non-trivial factor of $N$ in probabilistic polynomial time given $d=\textrm{ord}_N(a)$~\cite{DBLP:journals/siamcomp/Shor97}. We encode the inputs of $f$ with $q$ qubits.

In order to find $d$, Shor uses the quantum circuit $Q^{\Shor}_{f}$ from \cref{circuit:shor} with oracle-access to $f$. In $Q^{\Shor}_{f}$ we measure in the $q$ input qubits with high probability $y$'s that are close to some multiple of $\frac{2^q}{d}$. The original \textsc{Shor} algorithm then measures sufficiently many $y$'s (a constant number is sufficient) to extract $d$ in a classical post-process.

\begin{figure}[h]
	\centering
	\mbox{
		\Qcircuit @C=1em @R=1em {
			\lstick{\ket{0^{q}}}	& \qw & \gate{H_{q}} 	& \qw & \multigate{1}{U_{f}} & \qw & \gate{\text{QFT}_{q}} 	& \qw & \meter\\
			\lstick{\ket{0^n}}		& \qw & \qw	 	   		& \qw & \ghost{U_{f}}        & \qw & \qw	     			& \qw & \qw
		}
	}
	\caption{Quantum circuit $Q^{\Shor}_{f}$ 
		}\label{circuit:shor}
\end{figure}
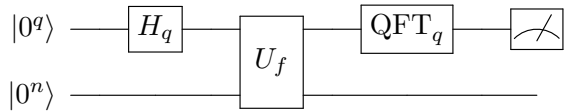

Our \textsc{Hashed-Shor} (Algorithm~\ref{alg:shor}) simply replaces circuit $Q_f^{\textsc{Shor}}$ with its hashed version $Q_{h \circ f}^{\textsc{Shor}}$, where we use oracle-access to hashed versions of $f$. Notice that \textsc{Shor} is a special case of \textsc{Hashed-Shor} for the choice $t=\lceil\log_2(N)\rceil$ and $\mathcal{H}_t=\{\id\}$. For this choice $\mathcal{H}_t$ is not universal, but we do not need universality in the following \cref{Hashed-Shor state} about the superposition produced by $Q_{h \circ f}$. From \cref{Hashed-Shor state} we conclude correctness of \textsc{Hashed-Shor} for any $t$-bit range hash function $h$.

\begin{algorithm}[ht]
	\DontPrintSemicolon
	\SetAlgoLined
	\SetKwInOut{Input}{Input}\SetKwInOut{Output}{Output}
	
	\SetKwComment{COMMENT}{$\triangleright$\ }{}
	
	\Input{ $f: \Z \rightarrow \Z_N$, universal $\mathcal{H}_t:= \{h: \Z_N \rightarrow \HD^t\}$}
	\Output{ Period $d$ of $f$}
	
	\Begin{
		Set $Y=\emptyset$.\; 
		\Repeat{$|Y|$ is sufficiently large.}{
			Run $Q^{\textsc{Shor}}_{h \circ f}$ on $\ket{0^q}\ket{0^t}$ for some freshly chosen $h \in_R \mathcal{H}_t$. \;
			Let $y$ be the measurement of the $q$ input qubits. \;
			If $y \not= 0$, then include $y$ in $Y$.
		}
		Compute $d$ from $Y$ in a classical post-process.\;
		\KwRet{$d$}
	}
	\caption{\textsc{Hashed-Shor}\label{alg:shor}}
\end{algorithm}

\newpage
\begin{lemma}\label{Hashed-Shor state}
	Let $N \in \mathbb{N}$, $a \in \Z_N^*$ with $d=\textrm{ord}_N(a)$ and $f(x)=a^x \bmod N$. Let $h: \Z_N \rightarrow \HD^t$. Define $M_z := \{k \in \Z_d \mid h(a^k \bmod N) = z\}$. An application of quantum circuit $Q^{\Shor}_{h \circ f}$ on input $\ket{0^q}\ket{0^t}$ yields a superposition
	\begin{equation}
	\label{hshor_eqn}
	\ket{\Phi_h}=\sum_{y=0}^{2^q-1}\sum_{z\in\HD^t}\frac{1}{2^q}\sum_{k\in M_z}\sum_{{c \geq 0:} \atop cd+k < 2^q} e^{2\pi i \frac{cd+k}{2^q}y}\ket{y}\ket{z}\;.
	\end{equation}
\end{lemma}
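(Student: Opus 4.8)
The plan is simply to track the state $\ket{0^q}\ket{0^t}$ through the three stages of the circuit $Q^{\Shor}_{h\circ f}$ of \cref{circuit:shor} (Hadamard on the input register, $U_{h\circ f}$, then $\QFT_q$ on the input register), using periodicity of $f$ to reindex the sum.

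First I would apply $H_q$ to the input register to obtain $\frac{1}{2^{q/2}}\sum_{x=0}^{2^q-1}\ket{x}\ket{0^t}$, and then $U_{h\circ f}$ to get $\frac{1}{2^{q/2}}\sum_{x=0}^{2^q-1}\ket{x}\ket{h(f(x))}$. The one substantive observation is that since $d=\ord_N(a)$, the value $a^x\bmod N$ depends only on $x\bmod d$, hence so does $h(f(x))=h(a^x\bmod N)$. Writing each $x\in\{0,\dots,2^q-1\}$ uniquely as $x=cd+k$ with $0\le k<d$ and $c\ge 0$ (so that the constraint is exactly $cd+k<2^q$), we have $h(f(cd+k))=h(a^k\bmod N)$, which equals the label $z\in\HD^t$ precisely when $k\in M_z$. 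Grouping the terms by this label gives
\[
\frac{1}{2^{q/2}}\sum_{z\in\HD^t}\sum_{k\in M_z}\ \sum_{c\ge 0:\ cd+k<2^q}\ket{cd+k}\ket{z}.
\]

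Next I would apply $\QFT_q$ to the input register, replacing $\ket{cd+k}$ by $\frac{1}{2^{q/2}}\sum_{y=0}^{2^q-1}e^{2\pi i\frac{cd+k}{2^q}y}\ket{y}$, which contributes the prefactor $\frac{1}{2^q}$ and the phase $e^{2\pi i\frac{cd+k}{2^q}y}$. Finally, pulling the sum over $y$ to the front and keeping the sums over $z$, $k$, $c$ inside yields exactly the claimed form \eqref{hshor_eqn}. I do not expect any real obstacle here; the only point requiring a little care is the bookkeeping of the index decomposition $x=cd+k$ — verifying that it is a bijection onto $\{(c,k): c\ge 0,\ 0\le k<d,\ cd+k<2^q\}$ — and noting that we do not need $\mathcal{H}_t$ to be universal for this lemma, only that $h$ is a function $\Z_N\to\HD^t$ (indeed the case $h=\id$, $t=\lceil\log_2 N\rceil$ recovers the state of ordinary \textsc{Shor}).
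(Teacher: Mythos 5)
Your proposal is correct and follows essentially the same route as the paper's proof: apply $H_q\otimes I_t$ and $U_{h\circ f}$, reindex $x=cd+k$ using the fact that $h(f(x))$ depends only on $x\bmod d$ so terms group by the label $z$ with $k\in M_z$, and then apply $\QFT_q$. The additional remarks on the bijectivity of the decomposition and on universality not being needed are accurate but not required.
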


\begin{proof}
	In $Q^{\Shor}_{h \circ f}$, we apply on input $\ket{0^q}\ket{0^t}$ first the operation $H_q \otimes I_t$ followed by $U_{h\circ f}$. This results in superposition
	\[\frac{1}{\sqrt{2^q}}\sum_{x=0}^{2^q-1}\ket{x}\ket{h(a^x\bmod N)} .\]
	Let $x = cd + k$ with $k \in \Z_d$. Since $a^{x} = a^{cd + k} \equiv a^k \mod N$, the value of $f(x)$ depends only on $k = (x \bmod d)$. Therefore, we rewrite the above superposition as
	\[
	\frac{1}{\sqrt{2^q}}\sum_{z\in\HD^t}\sum_{k\in M_z}\sum_{{c \geq 0:} \atop cd+k < 2^q}\ket{cd+k}\ket{z}\;,
	\]
	Eventually, an application of \QFT$_{q}$ yields
	\[
	\ket{\Phi_h}=\sum_{y=0}^{2^q-1}\sum_{z\in\HD^t}\underbrace{\frac{1}{2^q}\sum_{k\in M_z}\sum_{c \geq 0: \atop cd+k < 2^q} e^{2\pi i \frac{cd+k}{2^q}y}}_{w_{y,z}}\ket{y}\ket{z}\;.\]
\end{proof}

\begin{remark}
	For the choice $h=id$, \cref{Hashed-Shor state} provides an analysis of Shor's original quantum circuit $Q^{\Shor}_{f}$. This choice implies $M_z  = \{k \in \Z_d \mid a^k = z \text{ mod } N \}$. Therefore, we obtain the superposition 
	\begin{equation}
	\label{shor_eqn}
	\ket{\Phi}=
	\sum_{y=0}^{2^q-1}\sum_{k=0}^{d-1} \frac{1}{2^q}\sum_{c \geq 0: \atop cd+k < 2^q}e^{2 \pi i \frac{cd+k}{2^q}y}\ket{y}\ket{a^k \text{ mod } N}\;.
	\end{equation}
	and the amplitudes of $\ket{y}\ket{z_k}$ with $z_k=a^k \text{ mod } N$ are
	\begin{equation*}
	w_{y,z_k}=\frac{1}{2^q} \sum_{c \geq 0: \atop cd+k < 2^q}e^{2 \pi i \frac{cd+k}{2^q}y}\;.
	\end{equation*}
\end{remark}

For didactical reasons and ease of notation, let us look in the subsequent section at the special case of periods $d$ that are powers of two. In Section~\ref{shor_general}, we analyse the general $d$ case.

\subsection{Periods that are a power of two}

Let $d=2^r$ for some $r\in\mathbb{N}$ with $r \leq q$. Then $\max\{ c \in \mathbb N \mid cd+k < 2^q\} = \frac{2^q}{d}-1 =2^{q-r}-1$, independent of $k \in \Z_d$. Hence, let us define $m=\frac{2^{q}}{d}$ and $z_k= a^k\mod N$. Using $md = 2^q$, 
this allows us to rewrite Eq.~(\ref{shor_eqn}) and Eq.~(\ref{hshor_eqn}) as
\begin{equation}
\label{shor_eqn2}
\ket{\Phi}=
\sum_{y=0}^{2^q-1} \sum_{k=0}^{d-1}\left(\frac{1}{\sqrt{d}}\cdot e^{2 \pi i \frac{k}{2^q}y}\vphantom{\frac{1}{\sqrt{m\cdot2^q}}\sum_{c=0}^{m-1}e^{2 \pi i \frac{cd}{2^q}y}}\right)\cdot\left(\frac{1}{\sqrt{m\cdot2^q}}\sum_{c=0}^{m-1}e^{2 \pi i \frac{cd}{2^q}y}\right)\ket{y}\ket{z_k}
\end{equation}
respectively for $h: \Z_N \rightarrow \HD^t$ as
\begin{equation}
\label{hshor_eqn2}
\ket{\Phi_h}=
\sum_{y=0}^{2^q-1}
\sum_{z\in\HD^t}\left( \frac{1}{\sqrt{d}}  \sum_{k\in M_z} e^{2\pi i \frac{k}{2^q}y}\right)\cdot\left(\frac{1}{\sqrt{m\cdot2^q}}\sum_{c=0}^{m-1}e^{2 \pi i \frac{cd}{2^q}y}\right)\ket{y}\ket{z}
\;.
\end{equation}
Notice that the factor 
\[
\frac{1}{\sqrt{m\cdot2^q}}\sum_{c=0}^{m-1}e^{2 \pi i \frac{cd}{2^q}y}
\]
is identical in $\ket{\Phi}$ and its hashed version $\ket{\Phi_h}$. Further notice that the factor is independent of $z_k$ and $z$. In the following lemma we show that for a measurement of any $\ket{y}$, where $y$ is a multiple of $m$, this factor contributes to the probability with $\frac 1 d$.
\begin{lemma}
	\label{multiple}
	Let $d=2^r \leq 2^q$ and $y = \ell m$ for some $0 \leq \ell < d$. 
	Then we have
	\[
	\left| \frac{1}{\sqrt{m\cdot2^q}}\sum_{c=0}^{m-1}e^{2 \pi i \frac{cd}{2^q}y} \right|^2 = \frac 1 d\;.
	\]
\end{lemma}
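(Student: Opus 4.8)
The plan is to evaluate the geometric sum directly. Since $d = 2^r$ and $m = 2^q/d = 2^{q-r}$, we have $md = 2^q$, so the exponent simplifies: $\frac{cd}{2^q} y = \frac{cd}{md} y = \frac{c}{m} y$. Hence the sum is $\sum_{c=0}^{m-1} e^{2\pi i \frac{c}{m} y} = \sum_{c=0}^{m-1} \left(e^{2\pi i \frac{y}{m}}\right)^c$, a finite geometric series with ratio $\omega = e^{2\pi i y/m}$.

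Now I would use the hypothesis $y = \ell m$ for some integer $0 \le \ell < d$. Then $\frac{y}{m} = \ell \in \Z$, so $\omega = e^{2\pi i \ell} = 1$. Therefore every term in the sum equals $1$, and $\sum_{c=0}^{m-1} 1 = m$. Plugging back in:
\[
\left| \frac{1}{\sqrt{m \cdot 2^q}} \sum_{c=0}^{m-1} e^{2\pi i \frac{cd}{2^q} y} \right|^2 = \left| \frac{m}{\sqrt{m \cdot 2^q}} \right|^2 = \frac{m^2}{m \cdot 2^q} = \frac{m}{2^q} = \frac{1}{d},
\]
using $2^q = md$ in the last step.

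There is essentially no obstacle here: the key point is just recognizing that the condition ``$y$ is a multiple of $m$'' is exactly what makes the root of unity $\omega$ trivial, collapsing the geometric sum to its number of terms. One might optionally remark that this is the ``resonance'' case — for $y$ not a multiple of $m$ the same geometric sum would instead vanish (since $\omega \ne 1$ but $\omega^m = 1$), which is why in Shor's algorithm one measures only multiples of $m = 2^{q-r}$; but that dichotomy is not needed for the statement as given, which only asserts the value at multiples of $m$. The only care required is bookkeeping with the normalization factor $\frac{1}{\sqrt{m \cdot 2^q}}$ and the identity $m \cdot d = 2^q$.
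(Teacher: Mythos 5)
Your proof is correct and follows exactly the same route as the paper: substitute $y=\ell m$ so the exponent $2\pi i\frac{cd}{2^q}y$ becomes $2\pi i c\ell$, each term equals $1$, the sum collapses to $m$, and $\frac{m^2}{m\cdot 2^q}=\frac{m}{2^q}=\frac{1}{d}$. Nothing to add.
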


\begin{proof}
	Since $y = \ell m = \ell \frac{2^q}{d}$ we obtain 
	\[
	\left|\frac{1}{\sqrt{m\cdot2^q}}\sum_{c=0}^{m-1} e^{2\pi i \frac{cd}{2^q}y}\right|^2
	=
	\frac{1}{m\cdot2^q}\left|\sum_{c=0}^{m-1} e^{2\pi i c \ell}\right|^2
	= \frac{m^2}{m\cdot2^q}
	=
	\frac{m}{2^q}
	=
	\frac{1}{d}\;.\vspace*{-0.5cm}
	\]
\end{proof}

We now show that the same common factor ensures that in both superpositions $\ket{\Phi}$ and its hashed version $\ket{\Phi_h}$ we never measure some $\ket{y}$ if $y$ is not a multiple of $m$. 

\begin{lemma}
	\label{non-multiple}
	Let $d=2^r \leq 2^q$ and $y \in \{0, \ldots, 2^q-1\}$ with $m \nmid y$. Then we measure $\ket{y}$ in either $\ket{\Phi}$ or $\ket{\Phi_h}$ from \cref{shor_eqn2} or \cref{hshor_eqn2} with probability $0$.
\end{lemma}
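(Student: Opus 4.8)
The plan is to show that the common factor
\[
  S(y) := \frac{1}{\sqrt{m\cdot 2^q}}\sum_{c=0}^{m-1}e^{2\pi i\frac{cd}{2^q}y}
\]
vanishes whenever $m\nmid y$, which by \cref{shor_eqn2} and \cref{hshor_eqn2} immediately forces every amplitude $w_{y,z}$ (for all $z$ or all $z_k$) to be $0$, since $S(y)$ multiplies every term in both superpositions. The measurement probability of $\ket{y}$ is $\sum_{z}|w_{y,z}|^2$, so it is $0$ as claimed.

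\medskip
First I would simplify the summand. Using $d = 2^r$ and $m = 2^q/d = 2^{q-r}$, we have $\frac{cd}{2^q} = \frac{c}{m}$, so
\[
  \sum_{c=0}^{m-1}e^{2\pi i\frac{cd}{2^q}y} = \sum_{c=0}^{m-1}e^{2\pi i\frac{cy}{m}} = \sum_{c=0}^{m-1}\zeta^c,
  \qquad \zeta := e^{2\pi i y/m}.
\]
This is a finite geometric series. If $m \mid y$ then $\zeta = 1$ and the sum equals $m$ (this is the case already handled in \cref{multiple}); if $m\nmid y$, then $\zeta \neq 1$ and the geometric sum formula gives
\[
  \sum_{c=0}^{m-1}\zeta^c = \frac{\zeta^m - 1}{\zeta - 1} = \frac{e^{2\pi i y} - 1}{\zeta - 1} = 0,
\]
because $y$ is an integer and hence $e^{2\pi i y} = 1$, while the denominator $\zeta - 1 \neq 0$. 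Therefore $S(y) = 0$ whenever $m\nmid y$.

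\medskip
Finally I would feed this back into the two superposition formulas. In \cref{shor_eqn2} the amplitude of $\ket{y}\ket{z_k}$ is $\bigl(\tfrac{1}{\sqrt d}e^{2\pi i\frac{k}{2^q}y}\bigr)\cdot S(y)$, and in \cref{hshor_eqn2} the amplitude of $\ket{y}\ket{z}$ is $\bigl(\tfrac{1}{\sqrt d}\sum_{k\in M_z}e^{2\pi i\frac{k}{2^q}y}\bigr)\cdot S(y)$; in both cases the second factor is exactly $S(y)$, which we just showed is $0$. Hence all amplitudes associated with such a $y$ vanish, and the probability of measuring $\ket{y}$ in the first $q$ qubits — namely the sum of the squared moduli of these amplitudes over all $z$ (resp.\ $z_k$) — is $0$.

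\medskip
There is no real obstacle here; the only point requiring a little care is the bookkeeping that $S(y)$ genuinely factors out of \emph{every} term of both superpositions uniformly (which is precisely the content of the remark following \cref{hshor_eqn2} that the factor is "identical in $\ket{\Phi}$ and its hashed version $\ket{\Phi_h}$ and independent of $z_k$ and $z$"), so that killing $S(y)$ kills the whole probability mass at $y$. Everything else is the elementary geometric-series identity together with $e^{2\pi i y}=1$ for $y\in\Z$.
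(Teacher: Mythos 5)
Your proposal is correct and follows essentially the same route as the paper: both reduce the claim to showing that the common factor $\sum_{c=0}^{m-1}e^{2\pi i\frac{cd}{2^q}y}$ is a geometric series with ratio $e^{2\pi i y/m}\neq 1$ that sums to zero when $m\nmid y$, which kills every amplitude at $y$ in both $\ket{\Phi}$ and $\ket{\Phi_h}$. Your version is if anything slightly cleaner, since the paper's decomposition $y=m\ell+k$ with ``$0<k<\ell$'' contains a typo (it should read $0<k<m$) that your direct use of $\zeta=e^{2\pi i y/m}$ avoids.
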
 

\begin{proof} 
	Let $y=m \ell + k$ with $0 < k < \ell$. It suffices to show that $\left| \sum_{c=0}^{m-1}e^{2 \pi i \frac{cd}{2^q}y} \right|^2 = 0.$ Using $\frac{d}{2^q}=\frac 1 m$, we obtain
	\[
	\left|\sum_{c=0}^{m-1} e^{2\pi i \frac{cd}{2^q}y}\right|^2
	=
	\left|\sum_{c=0}^{m-1} e^{2\pi i \frac{c}{m}(m\ell + k)} \right|^2
	=
	\left|\sum_{c=0}^{m-1} \left(e^{2\pi i \frac{k}{m}}\right)^c  \right|^2
	= 0\;.\vspace*{-0.5cm}
	\]
\end{proof}

Hence, we conclude from Lemmata \ref{Hashed-Shor state}, \ref{multiple} and \ref{non-multiple} that in both 
$\ket{\Phi}$ and its hashed version $\ket{\Phi_h}$ we {\em always} measure some $\ket{y}$ for which $y= \ell m = \frac{\ell 2^q}{d}$. 
Assume that $\gcd(\ell, d)=1$, then we directly read off $d$ from $y$. If $\ell$ is uniformly distributed in the interval $[0, d)$ this happens with sufficient probability to compute $d$ in polynomial time. 

Indeed, in \textsc{Shor}'s original algorithm 
$\ell$ is uniformly distributed since the first factor in Eq.~(\ref{shor_eqn2}) satisfies for any $y$

\[
\sum_{k=0}^{d-1}\left|\frac{1}{\sqrt{d}}\cdot e^{2 \pi i \frac{k}{2^q}y}\right|^2 = \frac 1 d  \sum_{k=0}^{d-1}\left| e^{2 \pi i \frac{k}{2^q}y}\right|^2 = \frac 1 d  \sum_{k=0}^{d-1} 1 = 1\;.
\]

Similar to the reasoning in Section~\ref{hashed_simon}, we show that in the case of the hashed version $\ket{\Phi_h}$ we obtain any $\ket{y}$ with $y\not=0$ with probability of at least $\frac{1}{2d}$, where the probability is taken over the random choice of the hash function. This implies that we measure for $\ket{\Phi_h}$ the useless $y=0$ with at most probability $1-\frac{d-1}{2d} \approx \frac 1 2$.

\begin{theorem}\label{measure_probability_2^r}
	Let $N \in \mathbb{N}$, $a \in \Z_N^*$ with $d=\textrm{ord}_N(a)$ a power of two and $f(x) =a^x \bmod N$.
	Let $\mathcal{H}_t=\{h: \Z_N \rightarrow \HD^t\} $ be universal. Then we measure in \textsc{Hashed-Shor} in the $q$ input qubits any $y= \ell m$, $0 < \ell < d$ with probability $\frac{1-2^{-t}}{d}$, where the probability is taken over the random choice of $h \in \mathcal{H}_t$.
\end{theorem}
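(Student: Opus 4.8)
The plan is to mirror the proof of \cref{measuring_probabilities}, replacing the real Simon amplitudes by the (now complex) Shor amplitudes $w_{y,z}$ and the ``orthogonal to $s$'' structure by the ``multiple of $m$'' structure provided by \cref{multiple,non-multiple}. Fix throughout $y = \ell m$ with $0 < \ell < d$, write $z_k = a^k \bmod N$ for $k \in \Z_d$, and note that the $z_k$ are pairwise distinct since $\textrm{ord}_N(a) = d$.

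First I would record the baseline ($h = \textrm{id}$) probability. The factored form~\eqref{shor_eqn2} gives $w_{y,z_k} = \frac{1}{\sqrt d}\, e^{2\pi i \frac{k}{2^q} y}\cdot\left(\frac{1}{\sqrt{m\cdot 2^q}}\sum_{c=0}^{m-1} e^{2\pi i \frac{cd}{2^q} y}\right)$, so by \cref{multiple} every $|w_{y,z_k}|^2 = \frac1d\cdot\frac1d$, whence the Shor measurement probability is $p(y) = \sum_{k=0}^{d-1}|w_{y,z_k}|^2 = \frac1d$.

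Next I would rewrite the hashed amplitudes in terms of the Shor ones. Because $M_z = \{k \in \Z_d : h(z_k) = z\}$, the factored form~\eqref{hshor_eqn2} shows that the amplitude of $\ket{y}\ket{z}$ in $\ket{\Phi_h}$ is $\sum_{k : h(z_k) = z} w_{y,z_k}$. Expanding the squared modulus, summing over $z$, and taking the expectation over $h$ gives $p_h(y) = \E_{h\in\mathcal{H}_t}\left[\sum_{k_1,k_2 : h(z_{k_1}) = h(z_{k_2})} w_{y,z_{k_1}}\,\overline{w_{y,z_{k_2}}}\right]$. The diagonal terms $k_1 = k_2$ always survive and sum to $p(y) = \frac1d$; for $k_1 \neq k_2$ the values $z_{k_1}, z_{k_2}$ are distinct, so \cref{universal} yields $\Pr_{h\in\mathcal{H}_t}[h(z_{k_1}) = h(z_{k_2})] = 2^{-t}$. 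Thus $p_h(y) = \frac1d + 2^{-t}\sum_{k_1 \neq k_2} w_{y,z_{k_1}}\overline{w_{y,z_{k_2}}}$.

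The only step beyond bookkeeping is the Shor analogue of \cref{Simon_Sum_Zero}: I would show $\sum_{k=0}^{d-1} w_{y,z_k} = 0$, which follows from the factored form since $\sum_{k=0}^{d-1} e^{2\pi i \frac{k}{2^q}\ell m} = \sum_{k=0}^{d-1} e^{2\pi i \frac{k\ell}{d}} = 0$, as $0 < \ell < d$ forces $d \nmid \ell$. Expanding $0 = \left|\sum_k w_{y,z_k}\right|^2 = \sum_k |w_{y,z_k}|^2 + \sum_{k_1 \neq k_2} w_{y,z_{k_1}}\overline{w_{y,z_{k_2}}}$ then gives $\sum_{k_1 \neq k_2} w_{y,z_{k_1}}\overline{w_{y,z_{k_2}}} = -\frac1d$, and substituting yields $p_h(y) = \frac1d - \frac{2^{-t}}{d} = \frac{1-2^{-t}}{d}$. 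The points needing care are that the amplitudes are complex here, so one must carry conjugates (the off-diagonal sum is nonetheless real), and that the vanishing identity holds exactly for $y = \ell m$ with $\ell \not\equiv 0 \bmod d$ — precisely the range in the statement. I expect this last identity, together with correctly matching the Shor structure to the roles played by \cref{Simon_Sum_Zero} and \eqref{simon-amplitude} in the Simon proof, to be the only real obstacle.
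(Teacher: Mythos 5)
Your proposal is correct and follows essentially the same route as the paper: both factor out the common $\frac1d$ contribution via \cref{multiple}, use universality to weight the cross terms $k_1\neq k_2$ by $2^{-t}$ (relying, as you rightly note, on the $z_k$ being pairwise distinct), and close with the geometric-sum identity $\sum_{k=0}^{d-1}e^{2\pi i k\ell/d}=0$ for $0<\ell<d$ — the paper just applies this identity inline to the bare phases rather than packaging it as a standalone ``$\sum_k w_{y,z_k}=0$'' lemma. Your framing of that cancellation as the Shor analogue of \cref{Simon_Sum_Zero} matches exactly the cancellation criterion the paper later isolates in \cref{main_theorem_shor}.
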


	\begin{proof}
		Let us denote by $p_h = \Pr_{h \in \mathcal{H}_t}[y]$ the probability that we measure $y$ in \textsc{Hashed-Shor} in the $q$ input qubits. By~\cref{Hashed-Shor state}, Eq.~(\ref{hshor_eqn2}) and \Cref{multiple} we know that for all $y = \ell m = \ell \frac{2^q}{d}$ we have 
		\[
		p_h =\frac{1}{|\mathcal{H}_t|}\sum_{h\in\mathcal{H}_t} \sum_{z\in\HD^t} \left| \frac{1}{\sqrt{d}}\sum_{k\in M_{z}} e^{2\pi i \frac{k}{2^q}y}\right|^2\!\cdot\frac{1}{d} =
		\frac{1}{d^2}\cdot 
		\frac{1}{|\mathcal{H}_t|}\sum_{h\in\mathcal{H}_t}\sum_{z\in\HD^t}\left| \sum_{k\in M_{z}} e^{2\pi i \frac{k \ell}{d}}\right|^2\!\!.
		\]
		Recall that $M_z := \{k \in \Z_d \mid h(a^k \bmod N) = z\}$. Observe that for $k_1 \not= k_2$ we obtain a cross-product $e^{2\pi i \frac{k_1 \ell}{d}}\cdot\overline{e^{2\pi i \frac{k_2 \ell}{d}}} = e^{2\pi i \frac{(k_1-k_2)\ell}{d}}$ iff $k_1, k_2$ are in the same set $M_{z}, z \in \HD^t$, i.e. iff $h(a^{k_1} \bmod N)=h(a^{k_2} \bmod N)$. Using \Cref{universal} of a universal hash function family, we obtain $\Pr_{h \in \mathcal{H}_t}[h(a^{k_1} \bmod N)=h(a^{k_2} \bmod N)]=2^{-t}$ for any $k_1 \not= k_2$. This implies that for exactly $2^{-t}$ of all $h \in \mathcal{H}_t$ we obtain $h(a^{k_1} \bmod N)=h(a^{k_2} \bmod N)$. Therefore,
		\[
		p_h = \frac{1}{d^2}\cdot 
		\left(
		\sum_{k=0}^{d-1} \left|e^{2\pi i \frac{k \ell}{d}}\right|^2 
		+ 2^{-t}
		\sum_{k_1 \in \Z_d} \sum_{k_2\not=k_1\in \Z_d}e^{2\pi i \frac{(k_1-k_2) \ell }{d}}
		\right)\;.
		\]
		Since $k_1 - k_2 \not=0$, we can rewrite as
		\[
		p_h = \frac{1}{d^2}\cdot 
		\left(
		d
		+ \frac{d}{2^t}
		\sum_{k\in \Z_d\setminus\{0\}}e^{2\pi i \frac{k \ell }{d}}
		\right)
		=
		\frac{1}{d^2}\cdot 
		\left(
		d 
		- \frac{d}{2^t}
		\right)
		=
		\frac{1-2^{-t}}{d}\;.\vspace*{-0.5cm}
		\]
	\end{proof}

From \Cref{measure_probability_2^r} we see that in the hashed version $\ket{\Phi_h}$ we measure every $y = m\ell, y \not= 0$ with probability $\frac{1-2^{-t}}{d}$, whereas in comparison in $\ket{\Phi}$ we measure every $y= m\ell$ with probability $\frac 1 d$. It follows that in Eq.~(\ref{hshor_eqn2}) the scaling factor 
\begin{equation}
\label{scaling}
\mathcal{S} = \sum_{z\in\HD^t}\left\vert \frac{1}{\sqrt{d}} \sum_{k\in M_z} e^{2\pi i \frac{k}{2^q}y}\right\vert^2
\end{equation}
takes on expected value $1-2^{-t}$ for $y = \ell m$, $0 < \ell < d$ taken over all $h \in \mathcal{H}_t$. Notice that $\mathcal{S}$ is a symmetric function in $y$, i.e. $\mathcal{S}(y) =  \mathcal{S}(2^q-y)$.

Let us look at an example to illustrate how the probabilities behave.
We choose $N=51=3\cdot 17$, $a=2$ and $q=12$. This implies $d=\textrm{ord}_N(a)=8$ and $m = \frac{2^q}{d} = 512$. In $\Ket{\Phi}$ we measure some $y = m \ell = 512 \ell$, $0 \leq \ell < d = 8$ with probability $\frac 1 8$ each, as illustrated in \Cref{fig:shor}.

Let us assume we have in \textsc{Hashed-Shor} $M_0=\{2,3,4,7\}$ (using $t=1$). This fully specifies the scaling function $\mathcal{S}$ from Eq.~(\ref{scaling}). Thus, each amplitude from $\Ket{\Phi}$ is multiplied by $\mathcal{S}$, as illustrated in \Cref{fig:hashed_shor}.

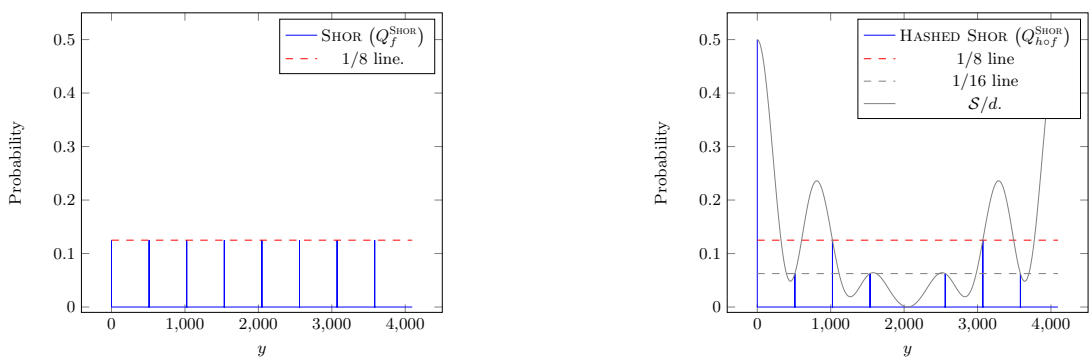
\begin{figure}[H]
	\begin{center}
		\begin{subfigure}[t]{0.48\textwidth}
			\centering
			\begin{tikzpicture}[scale=0.70]
			\begin{axis}[
			legend style={cells={align=left}},
			ymin=-0.01, ymax=0.55,
			xlabel=$y$,
			ylabel=Probability,
			]
			
			\addplot+[mark=none] coordinates {
				(0,0.125)
				(0,0)	
				(512,0)
				(512,0.125)
				(512,0)					
				(1024,0)
				(1024,0.125)
				(1024,0)						
				(1536,0)
				(1536,0.125)
				(1536,0)									
				(2048,0)
				(2048,0.125)
				(2048,0)				
				(2560,0)
				(2560,0.125)
				(2560,0)			
				(3072,0)
				(3072,0.125)
				(3072,0)			
				(3584,0)
				(3584,0.125)
				(3584,0)			
				(4095,0)					
			};		
			\addlegendentry{\textsc{Shor} $\left(Q^\Shor_{f}\right)$}
			\addplot+[red,dash pattern=on 4pt off 4pt, mark=none] coordinates {
				(0,0.125)						
				(4095,0.125)			
			};
			\addlegendentry{$1/8$ line.}
			\end{axis}
			\end{tikzpicture}%
			\caption{\textsc{Shor} with $d=8$, $m=512$. The probabilities are independent of the measurement of the output qubits $z_k$.}\label{fig:shor}
		\end{subfigure}\hfill%
		\begin{subfigure}[t]{0.48\textwidth}
			\centering
			\begin{tikzpicture}[scale=0.70]
			\begin{axis}[
			legend style={cells={align=left}},
			domain=0:4096,
			samples=4096,
			ymin=-0.01, ymax=0.55,
			xlabel=$y$,
			ylabel=Probability,
			]
			
			\addplot+[mark=none] coordinates {
				
				(0,0.5)
				(0,0)	
				(512,0)
				(512,0.0625)
				(512,0)					
				(1024,0)
				(1024,0.125)
				(1024,0)						
				(1536,0)
				(1536,0.0625)
				(1536,0)									
				(2048,0)
				(2048,0.0)
				(2048,0)				
				(2560,0)
				(2560,0.0625)
				(2560,0)			
				(3072,0)
				(3072,0.125)
				(3072,0)			
				(3584,0)
				(3584,0.0625)
				(3584,0)			
				(4095,0)			
			};
			\addlegendentry[align=center]{\textsc{Hashed Shor} $\left(Q^\Shor_{h\circ f}\right)$}
			\addplot+[red,dash pattern=on 4pt off 4pt, mark=none] coordinates {
				(0,0.125)					
				(4096,0.125)			
			};
			\addlegendentry{$1/8$ line}
			\addplot+[gray,dash pattern=on 4pt off 4pt, mark=none] coordinates {
				(0,0.0625)					
				(4096,0.0625)			
			};
			\addlegendentry{$1/16$ line}
			\addplot+[gray,mark=none, 
			samples=256] expression{
				(cos(deg(2*pi*0/4096*x))
				+cos(deg(2*pi*1/4096*x))
				+cos(deg(2*pi*2/4096*x))
				+cos(deg(2*pi*5/4096*x))
				+cos(deg(2*pi*0/4096*x))
				+cos(deg(2*pi*1/4096*x))
				+cos(deg(2*pi*1/4096*x))
				+cos(deg(2*pi*4/4096*x))
				+cos(deg(2*pi*2/4096*x))
				+cos(deg(2*pi*1/4096*x))
				+cos(deg(2*pi*0/4096*x))
				+cos(deg(2*pi*3/4096*x))
				+cos(deg(2*pi*5/4096*x))
				+cos(deg(2*pi*4/4096*x))
				+cos(deg(2*pi*3/4096*x))
				+cos(deg(2*pi*0/4096*x)))/(8*8)
				+(cos(deg(2*pi*0/4096*x))
				+cos(deg(2*pi*1/4096*x))
				+cos(deg(2*pi*5/4096*x))
				+cos(deg(2*pi*6/4096*x))
				+cos(deg(2*pi*1/4096*x))
				+cos(deg(2*pi*0/4096*x))
				+cos(deg(2*pi*4/4096*x))
				+cos(deg(2*pi*5/4096*x))
				+cos(deg(2*pi*5/4096*x))
				+cos(deg(2*pi*4/4096*x))
				+cos(deg(2*pi*0/4096*x))
				+cos(deg(2*pi*1/4096*x))
				+cos(deg(2*pi*6/4096*x))
				+cos(deg(2*pi*5/4096*x))
				+cos(deg(2*pi*1/4096*x))
				+cos(deg(2*pi*0/4096*x)))/(8*8)
			};		
			\addlegendentry{$\mathcal{S}/d$.}
			\end{axis}
			\end{tikzpicture}
			\caption{\textsc{Hashed-Shor} with $d=8$, $m=512$, $t=1$ and $M_0=\{2,3,4,7\}$.}\label{fig:hashed_shor}
			
		\end{subfigure}%
	\end{center}
	\caption{Probability distributions for \textsc{Shor} and \textsc{Hashed-Shor}}
	\label{fig:shor_diagram}
\end{figure}

\begin{theorem}\label{hshor_2^r}
	Let $N \in \mathbb{N}$, $a \in \Z_N^*$ with $d=\textrm{ord}_N(a)$ a power of two and $f(x)=a^x \mod N$.
	Let $\mathcal{H}_t=\{h: \Z_N \rightarrow \HD^t\} $ be universal, and let $x$ be represented by $q$ input qubits in $Q^\Shor_{h\circ f}$. Then \textsc{Hashed-Shor} finds $f$'s period $d$ with expected $\frac{2}{1-2^{-t}}\le4$ applications of quantum circuits $Q^\Shor_{h\circ f}$, $h \in_R\mathcal{H}_t$, that use only $q+t$ qubits.
\end{theorem}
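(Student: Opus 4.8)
The plan is to derive the running-time bound directly from \cref{measure_probability_2^r}, the only additional ingredients being an elementary count of the ``useful'' measurement outcomes and the observation that, for a power-of-two period, a single useful measurement already pins down $d$.

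First I would classify the possible measurements. By \cref{non-multiple} every $y$ that we can measure in the $q$ input qubits of $Q^{\Shor}_{h\circ f}$ is of the form $y = \ell m$ with $0 \le \ell < d$ and $m = 2^q/d$, and by \cref{measure_probability_2^r} each such $y$ with $0 < \ell < d$ is obtained with probability $\frac{1-2^{-t}}{d}$ over the random choice of $h \in \mathcal{H}_t$. Call a measurement \emph{useful} if $\gcd(\ell,d) = 1$. Since $d = 2^r$, this is the same as $\ell$ being odd, and there are exactly $d/2$ odd residues in $\{1, \dots, d-1\}$. Hence one application of $Q^{\Shor}_{h\circ f}$ yields a useful $y$ with probability $\frac{d}{2}\cdot\frac{1-2^{-t}}{d} = \frac{1-2^{-t}}{2}$, independently across applications.

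Next I would show a useful measurement suffices and make the stopping rule of \textsc{Hashed-Shor} explicit. If $\gcd(\ell,d) = 1$ then $\frac{y}{2^q} = \frac{\ell}{d}$ is already in lowest terms, so cancelling $\gcd(y, 2^q)$ (equivalently, a continued-fraction expansion of $y/2^q$) returns $d$ exactly; if instead $\ell$ is even — in particular if $\ell = 0$ — the candidate $d'$ obtained this way is a proper divisor of $d$, so $a^{d'} \not\equiv 1 \pmod N$. Thus the classical post-process can test each candidate by a single modular exponentiation, accept precisely on a useful measurement, and otherwise repeat. The number of applications of $Q^{\Shor}_{h\circ f}$ until the first useful measurement is geometrically distributed with parameter $\frac{1-2^{-t}}{2}$, hence has expectation $\frac{2}{1-2^{-t}}$; since $t \ge 1$ we have $1 - 2^{-t} \ge \tfrac12$ and therefore $\frac{2}{1-2^{-t}} \le 4$. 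Finally $Q^{\Shor}_{h\circ f}$ uses $q$ input qubits for $x$ and $t$ output qubits for $h(f(x)) \in \HD^t$, i.e.\ $q+t$ qubits in total, which gives the claim.

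The main point requiring care — rather than a real obstacle — is twofold: the combinatorial fact that exactly $d/2$ of the residues $\ell \in \{0,\dots,d-1\}$ are coprime to $d = 2^r$, and spelling out the acceptance criterion (verify $a^{d'} \equiv 1 \pmod N$) so that ``$|Y|$ sufficiently large'' in Algorithm~\ref{alg:shor} concretely means ``one useful measurement'' in the power-of-two regime. Everything else is immediate bookkeeping on top of \cref{measure_probability_2^r}.
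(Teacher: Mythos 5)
Your proposal is correct and follows essentially the same route as the paper's proof: count the $d/2$ odd values of $\ell$, each measured with probability $\frac{1-2^{-t}}{d}$ by \cref{measure_probability_2^r}, obtain success probability $\frac{1-2^{-t}}{2}$ per run, take the geometric expectation, and verify candidates via $a^{d'} \stackrel{?}{=} 1 \bmod N$. Your added remarks (that a non-coprime $\ell$ yields a proper divisor of $d$, and the explicit stopping rule) are just slightly more detailed bookkeeping than the paper provides.
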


	\begin{proof}
		In \textsc{Shor} we compute the fraction $\frac y {2^q} = \frac{\ell}{d}$ in reduced form. Since $d$ is a power of two, this fraction reveals $d$ in its denominator iff $\ell$ is odd. Using \cref{measure_probability_2^r}, we measure $y = m \ell$ with an odd $\ell$, $0 < \ell < d$ with probability $\frac{d}{2} \cdot \frac{1-2^{-t}}{d} = \frac{1-2^{-t}}{2}$. Thus, we need on expectation $\frac{2}{1-2^{-t}}\le 4$ applications of $Q^\Shor_{h\circ f}$ to find $f$'s period $d$.
		
		Notice that we can check the validity of $d$ via testing the identity $a^d \stackrel{?}{=} 1 \bmod N$.
	\end{proof}

For comparison, we need in \textsc{Shor}'s original algorithm with the non-hashed version of $f$ on expectation $2$ measurements until we find $d$.

%
%

\section{Hashed Period-Finding Including Shor}
\label{shor_general}
Notice that we proved in \cref{measuring_probabilities} and \cref{measure_probability_2^r} that when we move to the hashed version of our quantum circuits all probabilities to measure some $y \not=0$ decrease exactly by a factor of $(1-2^{-t})$  (over the random choice of the $t$-bit hash function). 

The same is true for finding arbitrary (non power of two) periods  with circuit $Q^{\Shor}_{h\circ f}$. However, this does not immediately follow from the proof of \cref{measure_probability_2^r}, because the proof builds on the special form of superposition $\Ket{\Phi_h}$ from Eq.~(\ref{hshor_eqn2}) that only holds if $d$ is a power of two. 
Here we show a more general result for period finding algorithms that applies for Shor's original circuit as well as for its Eker\aa -H\aa stad variant in the subsequent section. To this end let us define a generic period finding quantum circuit $Q_f^{\Period}$ (see \cref{circuit:period}). In \cref{circuit:period} we denote by $Q_1, Q_2$ any quantum circuitry that acts on the $q$ input qubits. For example, for Simon's circuit we have $Q_1 = Q_2 = H_q$ (see \cref{circuit:simon}). For Shor's circuit we have $Q_1 = H_q$ and $Q_2 = \text{QFT}_q$. In the following \cref{main_theorem} we define explicitly a {\em cancellation criterion} that this circuitry $Q_1, Q_2$ has to fulfill. An important feature of $Q_f^{\Period}$ is however that we apply $f$ only once.

\begin{figure}[h]
	\centering
	\mbox{
		\Qcircuit @C=1em @R=1em {
			\lstick{\ket{0^{q}}}	& \qw & \gate{\text{Q}_1} 	& \qw & \multigate{1}{U_{f}}& \qw & \gate{\text{Q}_{2}} 	& \qw & \meter\\
			\lstick{\ket{0^{n}}}	& \qw & \qw	 	   			& \qw & \ghost{U_{f}}      	& \qw & \qw	    		 		& \qw & \qw
		}
	}
	\caption{Quantum circuit $Q^{\Period}_{f}$}\label{circuit:period}
\end{figure}
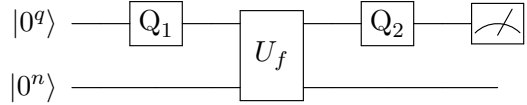
Now let us use our generic period finding circuit $Q_f^{\Period}$ inside a generic period finding algorithm \textsc{Period}  that uses a certain number of measurements of $Q_f^{\Period}$ and some classical post-processing. If we replace in \textsc{Period} the circuit $Q_f^{\Period}$ by its hashed variant $Q_{h \circ f}^{\Period}$ then we call the resulting algorithm \textsc{Hashed-Period} (Algorithm~\ref{alg:period}).
\begin{algorithm}[ht]
	\DontPrintSemicolon
	\SetAlgoLined
	\SetKwInOut{Input}{Input}\SetKwInOut{Output}{Output}
	
	\SetKwComment{COMMENT}{$\triangleright$\ }{}
	
	\Input{ $f: \HD^q  \rightarrow \HD^n$, universal $\mathcal{H}_t:= \{h: \HD^n \rightarrow \HD^t\}$}
	
	\Output{ Period $d$ of $f$}
	
	\Begin{
		Set $Y=\emptyset$.\; 
		\Repeat{$|Y|$ is sufficiently large.}{
			Run $Q^{\Period}_{h \circ f}$ on $\ket{0^q}\ket{0^t}$ for some freshly chosen $h \in_R \mathcal{H}_t$. \;
			Let $y$ be the measurement of the $q$ input qubits. \;
			If $y \not= 0^q$, then include $y$ in $Y$.
		}
		Compute $d$ from $Y$ in a classical post-process.\;
		\KwRet{$d$}
	}
	\caption{\textsc{Hashed-Period}}\label{alg:period}
\end{algorithm}

Notice that \textsc{Period} can be considered as special case of \textsc{Hashed-Period}, where we choose $t=n$ and the identity function $h=id$. This slightly abuses notation, since $\mathcal{H}_n=\{id\}$ is not universal.

The proof of the following theorem closely follows the reasoning in the proof of \cref{measuring_probabilities}. Here we show that the probabilities decreases by exactly a factor of $1-2^{-t}$ in the hashed version if a certain {\em cancellation criterion} (\cref{eq:cancel}) is met. 

\begin{theorem}\label{main_theorem}
	Let $f: \HD^q \rightarrow \HD^n$ and $\mathcal{H}_t=\{h: \HD^n \rightarrow \HD^t\} $ be universal. Let $Q^{\Period}_f$ be a quantum circuit that on input $\ket{0^q}\ket{0^n}$ yields a superposition
	\begin{equation}
	\label{eq:cancel}
		\ket{\Phi} = \sum_{y \in \HD^q} \sum_{f(x) \in \textrm{Im}(f)} w_{y,f(x)} \ket{y}\ket{f(x)} \textrm{ satisfying } \sum_{f(x) \in \textrm{Im}(f)} w_{y,f(x)} = 0 \textrm{ for any } y\not= 0. 
	\end{equation}
	Let us denote by $p(y)$, respectively $p_h(y)$, the probability to measure some $\ket{y}$, $y \not= 0$ in the $q$ input qubits when applying $Q^\Period_f$, respectively $Q^\Period_{h \circ f}$ with $h \in_R \mathcal{H}_t$. 
	Then
	$p_h(y) = (1-2^{-t})\cdot  p(y).$
\end{theorem}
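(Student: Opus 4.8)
The plan is to mimic the proof of \cref{measuring_probabilities} essentially verbatim, replacing the Simon-specific ingredients by the abstract hypotheses of the theorem. First I would record the effect of hashing on the superposition: since $U_{h\circ f}$ differs from $U_f$ only by relabelling the output register through $h$, applying $Q^{\Period}_{h\circ f}$ to $\ket{0^q}\ket{0^t}$ produces
\[
  \ket{\Phi_h} = \sum_{y\in\HD^q}\sum_{z'\in\HD^t}\Bigl(\sum_{z\in h^{-1}(z')} w_{y,z}\Bigr)\ket{y}\ket{z'},
\]
where $w_{y,z}=0$ for $z\notin\textrm{Im}(f)$. (Here one uses that $Q_1,Q_2$ act only on the input register, so the only change from $Q^{\Period}_f$ is which output basis states get grouped together.) Consequently, for fixed $y\not=0$ the probability to measure $y$, averaged over $h\in_R\mathcal{H}_t$, is
\[
  p_h(y) = \frac{1}{|\mathcal{H}_t|}\sum_{h\in\mathcal{H}_t}\sum_{z'\in\HD^t}\Bigl|\sum_{z\in h^{-1}(z')} w_{y,z}\Bigr|^2.
\]

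Next I would expand the square. Writing $\sum_{z'}\bigl|\sum_{z\in h^{-1}(z')}w_{y,z}\bigr|^2 = \sum_z |w_{y,z}|^2 + \sum_{z_1\not=z_2}[h(z_1)=h(z_2)]\,w_{y,z_1}\overline{w_{y,z_2}}$ and averaging over $h$, the universality of $\mathcal{H}_t$ gives $\Pr_h[h(z_1)=h(z_2)]=2^{-t}$ for every pair $z_1\not=z_2$ in the domain (in particular for pairs in $\textrm{Im}(f)$, where alone the summands are nonzero). Hence
\[
  p_h(y) = \sum_z |w_{y,z}|^2 + 2^{-t}\sum_{z_1\not=z_2} w_{y,z_1}\overline{w_{y,z_2}} = p(y) + 2^{-t}\Bigl(\Bigl|\sum_z w_{y,z}\Bigr|^2 - \sum_z|w_{y,z}|^2\Bigr).
\]
Now the cancellation criterion \cref{eq:cancel} kicks in: for $y\not=0$ we have $\sum_{z\in\textrm{Im}(f)} w_{y,z}=0$, so the bracket collapses to $-\sum_z|w_{y,z}|^2 = -p(y)$, and we get $p_h(y)=(1-2^{-t})\,p(y)$, as claimed.

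There is one point of care rather than a genuine obstacle: the $w_{y,z}$ need not be real in the general circuit (for Shor's QFT they are complex), so I would keep the conjugates explicit and note that $\sum_{z_1\not=z_2} w_{y,z_1}\overline{w_{y,z_2}} = |\sum_z w_{y,z}|^2 - \sum_z |w_{y,z}|^2$ is the correct algebraic identity; this is where the proof of \cref{measuring_probabilities} simplified because $w_{y,z}\in\R$. The only other thing to double-check is that the average over $h$ of a sum of cross terms equals $2^{-t}$ times the sum of all cross terms, which follows because exactly a $2^{-t}$-fraction of $h\in\mathcal{H}_t$ collide on each fixed pair — a counting statement that is immediate from \cref{universal} and is already used in \cref{measuring_probabilities}. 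So the substantive content is entirely in recognizing that \cref{eq:cancel} is precisely the hypothesis needed to make the bracket vanish; everything else is bookkeeping.
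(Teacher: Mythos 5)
Your proposal is correct and follows essentially the same route as the paper's proof: expand $p_h(y)$ over the collision sets $h^{-1}(z')$, use universality to weight the cross terms $w_{y,z_1}\overline{w_{y,z_2}}$ by $2^{-t}$, and invoke the cancellation criterion to eliminate them. Your explicit care with the complex conjugates matches what the paper does in this general setting (as opposed to the real-amplitude shortcut in the Simon-specific proof), so there is nothing to add.
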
 

	\begin{proof}
		For ease of notation let us denote $z=f(x)$. By definition, we have  $p(y) = \sum_{z \in \textrm{Im}(f)} |w_{y,z}|^2$.
		
		Now let us find an expression for $p_h(y)$ when using $Q_{h \circ f}^\Period$.
		For $h \in \mathcal{H}_t$ we denote $I_{h,z'}=\{ z \in \textrm{Im}(f) \mid h(z)=z'\}$.
		Since $\mathbin{\dot{\bigcup}}_{z'\in \F_2^t} I_{h, z'} = \textrm{Im}(f)$, we obtain 
		\begin{align}
			p_h(y)
			& = \frac{1}{|\mathcal{H}_t|}\sum_{h\in\mathcal{H}_t}\sum_{z'\in \textrm{Im}(h)}\left|\sum_{z \in I_{h,z'}} w_{y,z}\right|^2. \label{cross3}
		\end{align}
		In Eq.~(\ref{cross3}) we obtain a cross-product $w_{y,z_1}\overline{w_{y,z_2}}$ for $z_1 \not=z_2$ iff $z_1, z_2$ are in the same set $I_{h,z'}$, $z'\in \textrm{Im}(h)$, i.e. iff $h(z_1)=h(z_2)$. Using \Cref{universal} of a universal hash function family, we obtain $\Pr_{h \in \mathcal{H}_t}[h(z_1)=h(z_2)]=2^{-t}$ for any $z_1 \not= z_2$. This implies that for exactly $2^{-t}$ of all $h \in \mathcal{H}_t$ we obtain $h(z_1)=h(z_2)$. 
		We conclude that
		\[
		p_h(y) =\sum_{z \in \textrm{Im}(f)} |w_{y,z}|^2+2^{-t}\cdot\sum_{z_1\not=z_2} w_{y,z_1} \overline{w_{y,z_2}}.
		\]
		Our prerequisite $\sum_{z \in \textrm{Im}(f)} w_{y,z} = 0$ for any $y \not= 0$ implies
		\begin{align*}
			0 &=2^{-t}\left|\sum_{z \in \textrm{Im}(f)} w_{y,z}\right|^2 =2^{-t} \sum_{z \in \textrm{Im}(f)} \left|w_{y,z}\right|^2+2^{-t}\sum_{z_1\not=z_2} w_{y,z_1}\overline{w_{y,z_2}}\\ &= p_h(y) - (1-2^{-t})\cdot \sum_{z \in \textrm{Im}(f)} |w_{y,z}|^2.	
		\end{align*}  
		Together with the definition of $p(y)$ we conclude that 
		\[
		p_h(y) 
		=  (1-2^{-t})\cdot \sum_{z \in \textrm{Im}(f)} |w_{y,z}|^2
		=  (1-2^{-t})\cdot  p(y).
		\]
	\end{proof}

We already showed in \cref{Simon_Sum_Zero} that Simon's circuit $Q^{\Simon}_f$ fulfills the cancellation criterion (\cref{eq:cancel}) of \cref{main_theorem}. Thus, the statement of \cref{measuring_probabilities} directly follows from \cref{main_theorem}. However, for an improved intelligibility we preferred to prove \cref{measuring_probabilities} directly.

In the following \cref{main_theorem_shor} we show that $Q^{\Shor}_f$ also meets the cancellation criterion. Thus, going to the hashed version in Shor's algorithm immediately scales all probabilities by a factor of $1-2^{-t}$ for $y \not= 0$.

\begin{lemma}\label{main_theorem_shor}
	On input $\ket{0^q}\ket{0^n}$ the quantum circuit $Q^{\Shor}_{f}$  yields a
	superposition
	\[
		\ket{\Phi} = \sum_{y \in \HD^q} \sum_{f(x) \in \textrm{Im}(f)} w_{y,f(x)} \ket{y}\ket{f(x)} \textrm{ satisfying } \sum_{f(x) \in \textrm{Im}(f)} w_{y,f(x)} = 0 \textrm{ for any } y\not= 0. 
	\]
\end{lemma}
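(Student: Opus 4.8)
The plan is to read off the amplitudes of $Q^{\Shor}_f$ from the analysis already carried out for $h=\id$ in the remark following \cref{Hashed-Shor state}, and then sum them over the image of $f$. Recall that there the superposition is $\ket{\Phi} = \sum_{y,k} w_{y,z_k}\ket{y}\ket{z_k}$ with $z_k = a^k \bmod N$ and
\[
w_{y,z_k} = \frac{1}{2^q}\sum_{c \ge 0:\, cd+k < 2^q} e^{2\pi i \frac{cd+k}{2^q}y}.
\]
First I would note that, since $d = \ord_N(a)$, the map $k \mapsto a^k \bmod N$ is a bijection from $\Z_d = \{0,\dots,d-1\}$ onto $\textrm{Im}(f)$; hence $\sum_{z \in \textrm{Im}(f)} w_{y,z}$ equals $\sum_{k=0}^{d-1} w_{y,z_k}$, with no value of the image counted twice.

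The key step is a re-indexing of the resulting double sum. Every integer $x \in \{0,1,\dots,2^q-1\}$ has a unique Euclidean decomposition $x = cd + k$ with $k = (x \bmod d) \in \{0,\dots,d-1\}$ and $c = \lfloor x/d\rfloor \ge 0$, and conversely every pair $(k,c)$ appearing in the inner sum above yields such an $x = cd+k < 2^q$. Therefore
\[
\sum_{z \in \textrm{Im}(f)} w_{y,z} = \frac{1}{2^q}\sum_{k=0}^{d-1}\ \sum_{c \ge 0:\, cd+k < 2^q} e^{2\pi i \frac{cd+k}{2^q}y} = \frac{1}{2^q}\sum_{x=0}^{2^q-1} e^{2\pi i \frac{x}{2^q}y}.
\]
Finally, for any integer $y$ with $y \not\equiv 0 \pmod{2^q}$ we have $e^{2\pi i y/2^q} \ne 1$ while $e^{2\pi i y} = 1$, so the complete geometric sum evaluates to $\bigl(e^{2\pi i y} - 1\bigr)/\bigl(e^{2\pi i y/2^q} - 1\bigr) = 0$. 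This yields the cancellation criterion $\sum_{f(x)\in\textrm{Im}(f)} w_{y,f(x)} = 0$ for all $y \ne 0$, matching the form of \cref{eq:cancel} and letting us invoke \cref{main_theorem}.

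I do not expect a genuine obstacle here. The only point needing a moment's care is the re-indexing of the double sum as a single sum over all of $\{0,\dots,2^q-1\}$: it is important that we range $x$ over a full block of $2^q$ consecutive integers, so that each residue class modulo $d$ is covered exactly as often as dictated by the constraint $cd+k<2^q$ — and, in contrast to the power-of-two special case in \cref{hashed_shor}, we do \emph{not} need $d \mid 2^q$ for this argument, since the geometric sum $\sum_{x=0}^{2^q-1} e^{2\pi i xy/2^q}$ vanishes for every nonzero $y$ regardless of $d$.
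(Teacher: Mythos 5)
Your proposal is correct and follows essentially the same route as the paper's proof: both read off the amplitudes from the $h=\id$ case, re-index the double sum over $(k,c)$ as the single geometric sum $\frac{1}{2^q}\sum_{x=0}^{2^q-1} e^{2\pi i xy/2^q}$, and observe that it vanishes for $y\neq 0$. Your added remarks on the bijection $k\mapsto a^k\bmod N$ and on not needing $d\mid 2^q$ are correct points the paper leaves implicit.
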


\begin{proof}
	From \cref{shor_eqn} we know that $Q_{f}^{\Shor}$ yields a superposition
	\[
	\ket{\Phi}=
	\sum_{y=0}^{2^q-1} \sum_{f(x) \in \textrm{Im}(f)} w_{y,f(x)} \ket{y}\ket{f(x)} \textrm{ with } \sum_{f(x) \in \textrm{Im}(f)} w_{y,f(x)} = \sum_{k=0}^{d-1} \frac{1}{2^q} \sum_{c \geq 0: \atop cd+k < 2^q}e^{2 \pi i \frac{cd+k}{2^q}y}\;.
	\]
	We conclude for $y\neq 0$ that
	\[
	\sum_{f(x) \in \textrm{Im}(f)} w_{y,f(x)} = \sum_{k=0}^{d-1} \frac{1}{2^q} \sum_{c \geq 0: \atop cd+k < 2^q}e^{2 \pi i \frac{cd+k}{2^q}y} 
	=\sum_{r=0}^{2^q-1}\frac{1}{2^q}e^{2 \pi i \frac{r}{2^q}y}
	=\frac{1}{2^q}\sum_{r=0}^{2^q-1}\left(e^{2 \pi i \frac{y}{2^q}}\right)^r
	=0\;.	
	\] 	
\end{proof}

Since by \cref{main_theorem} the use of hashed versions at most halves all probabilities for $y \not= 0$, we expect that \textsc{Hashed-Period} requires at most twice as many measurements as \textsc{Period}. This is more formally shown in the following \cref{main_theorem_period}.

\begin{theorem}\label{main_theorem_period}
	Let $f: \HD^q \rightarrow \HD^n$ have period $d$, and let $\mathcal{H}_t=\{h: \HD^n \rightarrow \HD^t\} $ be universal.
	Assume that \textsc{Period} succeeds to find $d$ with probability $\rho$ with an expected number of $m$ measurements, using some $Q^{\Period}_{f}$ with $q+n$ qubits that satisfies the cancellation criterion (\cref{eq:cancel}) of \cref{main_theorem}. Then \textsc{Hashed-Period} succeeds to find $d$ with probability $\rho$ using $Q^{\Period}_{h \circ f}$, $h \in_R \mathcal{H}_t$, with only $q+t$ qubits and an expected number of $\frac{m}{1-2^{-t}}$ measurements.	
\end{theorem}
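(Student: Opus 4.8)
The plan is to reduce the statement to \cref{main_theorem} by comparing \textsc{Period} and \textsc{Hashed-Period} one measurement at a time, conditioning on the event that a measurement is \emph{recorded}, i.e.\ yields some $y \neq 0^q$ and hence gets appended to $Y$; recorded measurements are the only ones that influence the stopping criterion or the classical post-process. Write $p(\cdot)$ for the distribution of the $q$ input qubits when running $Q^{\Period}_f$, and set $p_0 = p(0^q)$, noting $p_0 < 1$ since otherwise the expected number of measurements of \textsc{Period} would be infinite. Each iteration of \textsc{Hashed-Period} draws a fresh $h \in_R \mathcal{H}_t$, runs $Q^{\Period}_{h\circ f}$ and measures, so its measurements are i.i.d.\ with the $h$-averaged law $p_h(\cdot)$ of \cref{main_theorem}. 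The cancellation criterion \cref{eq:cancel} holds by hypothesis, so \cref{main_theorem} gives $p_h(y) = (1-2^{-t})\,p(y)$ for all $y \neq 0^q$. Summing over $y \neq 0^q$, an iteration of \textsc{Hashed-Period} is recorded with probability $\beta_h := (1-2^{-t})(1-p_0)$, versus $\beta_* := 1-p_0$ for \textsc{Period}; and, crucially, conditioning on being recorded gives in \emph{both} algorithms the \emph{same} law $\tilde p(y) = p(y)/(1-p_0)$ on $\{y \neq 0^q\}$, because the common factor $1-2^{-t}$ cancels.

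Next I would use the fact that, for an i.i.d.\ stream of measurements, the sequence $(V_i)_{i\ge 1}$ of recorded values is i.i.d.\ with law $\tilde p$ and is jointly independent of the sequence $(G_i)_{i\ge 1}$ of gap lengths, where $G_i$ counts the circuit runs needed to obtain the $i$-th recorded value (geometric with parameter $\beta_*$, resp.\ $\beta_h$) --- the standard ``marked renewal'' decomposition of an i.i.d.\ stream. The stopping criterion of \textsc{Period}/\textsc{Hashed-Period} (be it ``$|Y|$ large enough'' or a value-dependent test such as linear independence), together with the post-process and its internal randomness, is a function of $(V_i)_{i\ge 1}$ alone. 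Since $(V_i)_{i\ge 1}$ has the same law in both algorithms, the number $T$ of recorded measurements and the event ``the output equals $d$'' have identical joint distributions; in particular \textsc{Hashed-Period} outputs $d$ with probability $\rho$, and $\E[T]$ is the same in both.

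For the measurement count I would apply Wald's identity to $M = \sum_{i=1}^{T} G_i$, the total number of circuit runs: as $T$ is independent of $(G_i)_{i\ge 1}$, we get $\E[M] = \E[T]/\beta_*$ for \textsc{Period} and $\E[M_h] = \E[T]/\beta_h$ for \textsc{Hashed-Period}. The former equals $m$ by assumption, whence $\E[T] = m\,\beta_* = m(1-p_0)$, and therefore $\E[M_h] = m(1-p_0)/\bigl((1-2^{-t})(1-p_0)\bigr) = m/(1-2^{-t})$. The qubit bound is immediate, since $Q^{\Period}_{h\circ f}$ acts on $q$ input qubits and $t$ output qubits. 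The only delicate point is justifying the independence of the recorded-value process $(V_i)$ from the gap process $(G_i)$, which is what makes Wald's identity applicable and what lets the stopping rule and post-process transfer verbatim from \textsc{Period} to \textsc{Hashed-Period}; all the quantum content is already packaged into the single clean identity $p_h(y) = (1-2^{-t})\,p(y)$ from \cref{main_theorem}.
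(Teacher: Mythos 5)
Your proposal is correct and follows essentially the same route as the paper: both arguments rest on the identity $p_h(y) = (1-2^{-t})\,p(y)$ from \cref{main_theorem}, deduce that the conditional distribution on $\{y \neq 0^q\}$ is unchanged (so the recorded-value stream, and hence the stopping rule, post-process, and success probability $\rho$, transfer verbatim), and scale the expected measurement count by $\frac{1}{1-2^{-t}}$. The only difference is that you make the measurement-count step fully rigorous via the renewal decomposition and Wald's identity, where the paper asserts the $\frac{1}{1-2^{-t}}$ blow-up directly from the per-iteration probability of a non-zero outcome.
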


	\begin{proof}
		We first show the factor of $\frac{1}{1-2^{-t}}$ difference in the expected number of measurements.
		In the case of $Q^{\Period}_{f}$ we measure some $y \not = 0$ with probability $\sum_{y =1}^{2^q-1} p(y)$, whereas for $Q^{\Period}_{h \circ f}$ we measure $y \not=0$ with $1-2^{-t}$ times the probability $\sum_{y =1}^{2^q-1} (1-2^{-t})\cdot p(y) = (1-2^{-t}) \cdot\sum_{y =1}^{2^q-1} p(y)$ according to \cref{main_theorem}. This implies that on expectation we need $\frac{1}{1-2^{-t}}$ as many measurements.
		
		It remains to show that \textsc{Hashed-Period} has the same success probability $\rho$ as \textsc{Period} to compute the period $d$. To this end we show that conditioned on $y \not= 0$, both circuits $Q^{\Period}_{f}$ and $Q^{\Period}_{h \circ f}$ yield an identical probability distribution for the measured $\ket{y}$ in the $q$ input qubits.
		
		Let $p(y)$, respectively $p_h(y)$, be the probability that we measure $\ket{y}$ in the $q$ input qubits using $Q^{\Shor}_{f}$, respectively $Q^{\Shor}_{h \circ f}$. 
		Since \textsc{Period} conditions on measuring $y \not=0$ we obtain in the case of $Q^{\Shor}_{f}$ the probabilities
		\[
		\frac{p(y)}{\sum_{y =1}^{2^q-1} p(y)} \textrm{ for any } y\not=0.
		\]
		In the case $Q^{\Shor}_{h \circ f}$, we obtain using \cref{main_theorem} the same probabilities
		\[
		\frac{(1-2^{-t})\cdot p(y)}{\sum_{y =1}^{2^q-1} (1-2^{-t})\cdot p(y)} = \frac{p(y)}{\sum_{y =1}^{2^q-1} p(y)} \textrm{ for any } y\not=0.
		\]
		Since both probability distributions are identical, the success probability $\rho$ is identical as well, independent of any specific post-process for computing $d$.
	\end{proof}

Since by \cref{main_theorem_shor} Shor's circuit $Q_f^{\Shor}$ satisfies the cancellation criterion of \cref{main_theorem}, \cref{main_theorem_period} implies that we can implement Shor's algorithm oracle-based with $q+t$ instead of $q+n$ qubits at the cost of only $\frac{1}{1-2^{-t}}$ times as many measurements. In other words, for $t=1$ we save all but one of the output qubits at the cost of twice as many measurements.

%
%

%
%
\section{Oracle-Based Hashed Eker\aa -H\aa stad}
\label{ekera_sec}

In 2017, Eker\aa\ and H\aa stad~\cite{DBLP:journals/corr/EkeraH17} proposed a variant of Shor's  algorithm for computing the discrete logarithms of $x=g^d$ in polynomial time with only $(1+o(1))\log d$ input qubits. The Eker\aa -H\aa stad algorithm saves input qubits in comparison to Shor's original discrete logarithm algorithm whenever $d$ is significantly smaller than the group order. 

An interesting application of such a small discrete logarithm algorithm is the factorization of $n$-bit RSA moduli $N=pq$, where $p,q$ are primes of the same bit-size. Let $g \in_R \Z_N^*$. Then $\textrm{ord}_N(g)$ divides $\phi(N)/2 = (p-1)(q-1)/2 = \frac{N+1}{2} - \frac{p+q}{2}$. Therefore
\[
  	x:= g^{\frac{N+1}{2}} = g^{\frac{p+q}{2}} \bmod N.
\]  
Hence, we obtain a discrete logarithm instance in $\Z_N^*$ where the desired logarithm $d = \frac{p+q}{2}$ is of size only roughly $\frac n 2$ bits, whereas group elements have to be represented with $n$ bits. Notice that the knowledge of $d=\frac{p+q}{2}$ together with $N=pq$ immediately yields the factorization of $N$ in polynomial time.

The Eker\aa-H\aa stad algorithm computes $d$ with $(\frac 1 2 + \frac 1 s)n$ input and $n$ output qubits, using a classical post-process that takes time polynomial in $n$ and $s^s$. Choosing $s = \frac{\log n}{\log \log n}$, we obtain a polynomial time factoring algorithm with a total of $(\frac 3 2 + o(1))n$ qubits. 

\medskip

In the following, we show that the Eker\aa-H\aa stad algorithm is covered by our framework of quantum period finding algorithms which fulfill the cancellation criterion of \cref{eq:cancel} from \cref{main_theorem}. Thus, by \cref{main_theorem_period} we can save all but $1$ of the $n$ output qubits via (oracle-based) hashing, at the cost of only doubling the number of quantum measurements. This in turn leads to a polynomial time (oracle-based) factorization algorithm for $n$-bit RSA numbers using only $(\frac 1 2 + o(1))n$ qubits. 
Concerning discrete logarithms, with our (oracle-based) hashing approach we can quantumly compute $d$ from $g$ and $g^d$ in polynomial time using only $(1 + o(1))\log d$ qubits.

Let $(g, x=g^d, S(G))$ be a discrete logarithm instance with $m=\log d$.  Here $S(G)$ specifies how we compute in the group $G$ generated by $g$,  e.g. $S(G)=N$ specifies that we compute modulo $N$ in the group $G=\Z_N^*$.
Define 
\[
  	f_{g,x,S(G)}(a,b) = a^x \cdot x^{-b} = g^{a-bd}.  
\]
The Eker\aa-H\aa stad quantum circuit $Q_{f}^{\EHS}$ from \cref{circuit:ekeraa} computes on input $\ket{0^{m +\ell}}\ket{0^{\ell}}\ket{0^n}$, where $\ell:=\frac m s$,  a superposition 
\begin{equation}
	\label{Ekera_state}
	\ket{\Phi}=\frac{1}{{2^{m + 2 \ell}}}\sum_{a,j=0}^{2^{m + \ell}-1}\sum_{b,k=0}^{2^{\ell}-1}
	e^{2 \pi i (aj + 2^m b k ) / 2^{m +\ell}} \ket{j,k,f_{g,x,S(G)}(a,b)} \ . 
\end{equation}

\begin{figure}[h]
	\centering
	\mbox{
		\Qcircuit @C=1em @R=1em {
			\lstick{\ket{0^{\ell+m}}}	& \qw & \gate{H_{\ell+m}} 	& \qw & \multigate{2}{U_{f_{g,x,S(G)}}} & \qw & \gate{\text{QFT}_{l+m}} & \qw & \meter\\
			\lstick{\ket{0^{\ell}}}	& \qw & \gate{H_{\ell}} 	& \qw & \ghost{U_{f_{g,x,S(G)}}}  		& \qw & \gate{\text{QFT}_{\ell}} 	& \qw & \meter\\
			\lstick{\ket{0^{n}}}	& \qw & \qw	 	   		& \qw & \ghost{U_{f_{g,x,S(G)}}}        & \qw & \qw	    		 		& \qw & \qw
		}
	}
	\caption{Quantum circuit $Q^{\EHS}_{f}$}\label{circuit:ekeraa}
\end{figure}

The main step in the analysis of  Eker\aa\--H\aa stad shows that we measure in the $m +2\ell = (1+ \frac 2 s)m= (1+ \frac 2 s) \log d$ input qubits with high probability so-called {\em good pairs} $(j,k)$  that help us in computing $d$ via some lattice reduction technique. 

In the following \cref{main_theorem_ekera}, we show that $Q_{f}^{\EHS}$ satisfies our cancellation criterion of \cref{main_theorem}. Thus, we conclude from \cref{main_theorem} that by moving to the $1$-bit hashed version $Q_{h \circ f}^{\EHS}$ we lower the probabilities of measuring good $(j,k)$ only by a factor of $\frac 1 2$ (averaged over all hash functions).

\begin{lemma}\label{main_theorem_ekera}	
	Let $(g,x,S(G))$ be a discrete logarithm instance and  \linebreak$f_{g,x,S(G)}(a,b)=g^{a} x^{-b}$.
	On input $\ket{0^{2\ell+m}}\ket{0^n}$ the quantum circuit $Q^{\EHS}_{f}$  yields a
	superposition
	\[
		\ket{\Phi} = \sum_{y \in \HD^{m + 2\ell}} \sum_{f(x) \in \textrm{Im}(f)} w_{y,f(x)} \ket{y}\ket{f(x)} \textrm{ satisfying } \sum_{f(x) \in \textrm{Im}(f)} w_{y,f(x)} = 0 \textrm{ for any } y\not= 0\;. 
	\]
\end{lemma}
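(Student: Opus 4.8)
I would prove Lemma~\ref{main_theorem_ekera} by tracing the superposition \eqref{Ekera_state} through the $\QFT$ steps and summing the resulting amplitudes over the image of $f = f_{g,x,S(G)}$. The key observation is the same as in the proof of \cref{main_theorem_shor}: summing $w_{y,f(x)}$ over all $f(x)\in\mathrm{Im}(f)$ amounts to summing the amplitude of a fixed output-basis-state $\ket{y}$ over \emph{all} values taken by $f$, which is equivalent to projecting the output register onto the uniform (unnormalized) state and reading off the coefficient of $\ket{y}$ in the remaining input register. Because $f$ is applied only once in $Q^{\EHS}_f$, this sum collapses to a clean exponential sum over the full input domain, which then vanishes by a geometric-series argument for every $y\neq 0$.

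\medskip

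\noindent\textbf{Key steps, in order.} First I would recall from \eqref{Ekera_state} that before the final $\QFT_{\ell+m}\otimes\QFT_\ell$ the state is (up to normalization) $\sum_{a,b}\ket{a}\ket{b}\ket{f(a,b)}$ with $a$ ranging over $\mathbb{Z}_{2^{m+\ell}}$ and $b$ over $\mathbb{Z}_{2^\ell}$, so that after the Fourier transforms the amplitude of $\ket{j,k}\ket{f(a_0,b_0)}$ is
\[
  w_{(j,k),f(a_0,b_0)} \;=\; \frac{1}{2^{m+2\ell}}\sum_{(a,b):\,f(a,b)=f(a_0,b_0)} e^{2\pi i(aj + 2^m bk)/2^{m+\ell}}.
\]
Second, I would fix $y=(j,k)\neq 0$ and sum over all $f(a_0,b_0)\in\mathrm{Im}(f)$; grouping the double sum over output values and preimages simply reconstitutes the sum over the entire domain, giving
\[
  \sum_{f(x)\in\mathrm{Im}(f)} w_{y,f(x)} \;=\; \frac{1}{2^{m+2\ell}}\sum_{a=0}^{2^{m+\ell}-1}\sum_{b=0}^{2^\ell-1} e^{2\pi i(aj + 2^m bk)/2^{m+\ell}}.
\]
Third, I would factor this as a product of two independent geometric sums, $\bigl(\sum_a e^{2\pi i a j/2^{m+\ell}}\bigr)\cdot\bigl(\sum_b e^{2\pi i b k/2^\ell}\bigr)$, and observe that since $(j,k)\neq(0,0)$ at least one of $j\not\equiv 0\pmod{2^{m+\ell}}$ or $k\not\equiv 0\pmod{2^\ell}$ holds, so at least one of the two factors is a full geometric sum of a nontrivial root of unity and hence vanishes. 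This yields $\sum_{f(x)\in\mathrm{Im}(f)} w_{y,f(x)} = 0$, which is exactly the cancellation criterion \eqref{eq:cancel}.

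\medskip

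\noindent\textbf{Main obstacle.} The only real subtlety is bookkeeping in the regrouping step: one must be careful that the map $(a,b)\mapsto f(a,b)=g^{a-bd}$ is not injective, so that each $f(x)\in\mathrm{Im}(f)$ collects a whole fiber of $(a,b)$ pairs, and that summing $w_{y,f(x)}$ over $f(x)\in\mathrm{Im}(f)$ really does reassemble the unrestricted double sum over $\mathbb{Z}_{2^{m+\ell}}\times\mathbb{Z}_{2^\ell}$ with no pair counted twice and none omitted --- this is precisely the partition $\mathbin{\dot{\bigcup}}_{z\in\mathrm{Im}(f)} f^{-1}(z) = \text{domain}$. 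Once this partition is invoked, the remaining computation is the standard fact that $\sum_{r=0}^{M-1}\zeta^r = 0$ for any $M$-th root of unity $\zeta\neq 1$, applied in each of the two coordinates; I would not grind through the normalization constant since the vanishing is what matters. I also note that, unlike the power-of-two Shor analysis, no structural assumption on $d$ relative to the group order is needed here --- the cancellation is a pure property of the two Fourier registers and the single application of $U_f$.
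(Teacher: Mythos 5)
Your proposal is correct and follows essentially the same route as the paper's proof: read off $\sum_{f(x)\in\mathrm{Im}(f)} w_{(j,k),f(x)}$ from Eq.~(\ref{Ekera_state}) as the unrestricted double sum over $(a,b)$, factor it into the two independent geometric sums $\bigl(\sum_a e^{2\pi i a j/2^{m+\ell}}\bigr)\bigl(\sum_b e^{2\pi i b k/2^{\ell}}\bigr)$, and note that $(j,k)\neq(0,0)$ forces at least one factor to vanish. Your explicit remark about the fiber partition $\mathbin{\dot{\bigcup}}_{z\in\mathrm{Im}(f)} f^{-1}(z)$ reassembling the full domain sum is a point the paper leaves implicit, but it is the same argument.
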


\begin{proof}
	From Eq.~(\ref{Ekera_state}) with $y=(j,k)$ we know that $Q_{f}^{\EHS}$ yields a superposition
	\begin{align*}
	& \ket{\Phi}=
	\sum_{j=0}^{2^{m +\ell}-1} \sum_{k=0}^{2^{m + \ell}-1}  \sum_{f(x) \in \textrm{Im}(f)} w_{(j,k),f(x)} \ket{j,k}\ket{f(x)} \textrm{ with } \\ 
	& \sum_{f(x) \in \textrm{Im}(f)} w_{(j,k),f(x)} = \frac{1}{{2^{m + 2\ell}}}\sum_{a=0}^{2^{m + \ell}-1}\sum_{b=0}^{2^{\ell}-1}
	e^{2 \pi i (aj + 2^m b k ) / 2^{m + \ell}}.
	\end{align*}
	Hence for $y\neq 0$ we obtain
	\begin{align*}
	\sum_{f(x) \in \textrm{Im}(f)} w_{(j,k),f(x)} 	& =
	\frac{1}{2^{m + 2\ell}}
	\left(\sum_{a=0}^{2^{m + \ell}-1}e^{2\pi i aj/2^{m + \ell}}\right)\cdot\left(\sum_{b=0}^{2^{\ell}-1}e^{2\pi i bk/2^{\ell}}\right)\\
	& =
	\frac{1}{2^{m + 2\ell}}
	\left(\sum_{a=0}^{2^{m + \ell}-1}\left(e^{2\pi i j/2^{m + \ell}}\right)^a\right)\cdot\left(\sum_{b=0}^{2^{\ell}-1}\left(e^{2\pi i k/2^{l}}\right)^b\right) \;.
	\end{align*}
	Since by prerequisite $(j,k) \not=(0,0) \in \Z_{2^{m + \ell}} \times \Z_{2^{\ell}}$, we have $j \not= 0 \bmod 2^{m + \ell}$ or $k \not= 0 \bmod 2^{\ell}$. This implies that at least one of the factors is identical $0$.
\end{proof}

By \cref{main_theorem_period},  replacing in the Eker\aa -H\aa stad algorithm the quantum circuit $Q_{f}^{\EHS}$ by single output bit circuits $Q_{h \circ f}^{\EHS}$  comes at the cost of only twice the number of measurements. Since the Eker\aa -H\aa stad algorithm finds discrete logarithms $d$ in polynomial time using only\linebreak $m + 2\ell = (1 + o(1))\log d$ input qubits, we obtain from \cref{main_theorem_period} the following corollary.

\begin{corollary}
	Eker\aa -H\aa stad's Shor variant admits an {\em oracle-based} hashed version that 
	\begin{enumerate}
		\item computes discrete logarithms $d$ from $g, g^d$ in polynomial time using\linebreak $(1+o(1))\log d$ qubits,  
		\item factors $n$-bit RSA numbers in time polynomial in $n$ using $(\frac 1 2 +o(1)) n$ qubits.
	\end{enumerate}
\end{corollary}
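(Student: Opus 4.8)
The plan is to obtain the corollary by composing \cref{main_theorem_ekera} with \cref{main_theorem_period}, instantiated at $t=1$, and then reading off the qubit counts in the two regimes.

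First I would note that \cref{main_theorem_ekera} has just verified that the Eker\aa -H\aa stad circuit $Q_f^{\EHS}$ applied to $\ket{0^{m+2\ell}}\ket{0^n}$ produces a superposition $\sum_{y}\sum_{f(x)\in\textrm{Im}(f)} w_{y,f(x)}\ket{y}\ket{f(x)}$ meeting the cancellation criterion $\sum_{f(x)\in\textrm{Im}(f)} w_{y,f(x)} = 0$ for every $y=(j,k)\neq 0$. Since $Q_f^{\EHS}$ also applies the embedding $U_f$ only once and has exactly the shape of the generic circuit $Q^{\Period}_f$ (with $Q_1 = H_{\ell+m}\otimes H_\ell$ and $Q_2 = \QFT_{\ell+m}\otimes\QFT_\ell$ acting only on the $q:=m+2\ell$ input qubits), the hypotheses of \cref{main_theorem_period} are met with input length $q=m+2\ell$ and output length $n$. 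Applying \cref{main_theorem_period} with the universal family $\mathcal{H}_1=\{h:\HD^n\to\HD\}$ then gives an oracle-based \textsc{Hashed-Period} instance built from $Q_{h\circ f}^{\EHS}$ that uses only $q+1$ qubits, succeeds with the same probability $\rho$ of recovering the target through the original Eker\aa -H\aa stad lattice post-process — this is unaffected because, conditioned on $y\neq 0$, the two measurement distributions coincide — and needs on expectation only $\tfrac{1}{1-2^{-1}}=2$ times as many measurements, hence a number that remains polynomially bounded; the classical post-processing time is unchanged.

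For part (1) I would recall that Eker\aa -H\aa stad recovers the discrete logarithm $d$ of $x=g^d$ in time polynomial in $\log d$ and $s^s$ using $q=m+2\ell=(1+\tfrac 2 s)\log d$ input qubits; choosing $s$ a slowly growing function of $\log d$ so that $\tfrac 2 s = o(1)$ while $s^s$ stays polynomial in $\log d$ makes the input register of size $(1+o(1))\log d$, and the hashed version then uses $q+1=(1+o(1))\log d$ qubits in total while the two-fold blow-up in measurements keeps everything polynomial time. For part (2) I would plug in the RSA instance $x=g^{(N+1)/2}=g^{(p+q)/2}\bmod N$ set up earlier in the text: here $d=\tfrac{p+q}{2}$ has $\log d\approx\tfrac n 2$ bits and group elements occupy $n$ bits, so Eker\aa -H\aa stad uses $(\tfrac12+\tfrac1s)n$ input and $n$ output qubits, i.e. $(\tfrac32+o(1))n$ in total with $s=\tfrac{\log n}{\log\log n}$; hashing the $n$ output qubits down to one leaves $(\tfrac12+o(1))n+1=(\tfrac12+o(1))n$ qubits, the post-process (lattice reduction for $d$, then solving $p+q=2d,\ pq=N$ for the factors) is still polynomial, and the number of measurements only doubles.

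The only point that needs a little care — rather than a genuine obstacle — is lining up \cref{main_theorem_ekera}'s statement with the hypotheses of \cref{main_theorem}/\cref{main_theorem_period}: that the pair $(j,k)$ plays the role of the single index $y$, that ``$y\neq 0$'' means $(j,k)\neq(0,0)$, and that Eker\aa -H\aa stad's $f$ is ``periodic'' only in the weak sense of having a period lattice, so one must rely on \cref{main_theorem_period} transferring the post-process via the conditional-on-$y\neq 0$ distribution rather than via any particular structure of a single period. Everything else is routine substitution of parameters.
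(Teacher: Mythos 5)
Your proposal is correct and follows essentially the same route as the paper: \cref{main_theorem_ekera} verifies the cancellation criterion for $Q_f^{\EHS}$, \cref{main_theorem_period} with $t=1$ then yields the oracle-based hashed version with $q+1$ qubits and twice the expected measurements, and the qubit counts $(1+o(1))\log d$ and $(\frac12+o(1))n$ follow by substituting the Eker\aa -H\aa stad parameters. Your remark about identifying $y$ with the pair $(j,k)$ and relying on the conditional-on-$y\neq 0$ distribution rather than on any single-period structure matches the paper's implicit treatment.
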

\noindent {\bf Open Problem:} Can we modify our oracle-based approach into a real-world application similar to the results in \cref{sec:even_mansou} for the Simon algorithm? That is, can we define (not necessary single bit) hashed versions of the exponentiation function without first computing the full function value? 

%
%

%
%

%
%


\section{Hashed Mosca-Ekert}\label{sec:mosca_ekert}

Let us briefly recall the Mosca-Ekert variant~\cite{DBLP:conf/qcqc/MoscaE98} of Shor's algorithm that works with a single input qubit. In  \cref{circuit:ME}, we elaborate on Shor's quantum circuit $Q_{f}^{\Shor}$ from \cref{circuit:shor}, where we instantiate the exponentiation function $f:x \mapsto a^x \bmod N$ and $QFT_3$ on three input qubits. The exponentiation is performed bitwise via controlled multiplications with powers of $a^1, a^2, a^4$ via quantum mappings
\[
  \bar U_{a^i} : \{0,1\} \times \Z_N^* \rightarrow \{0,1\} \times \Z_N, (x,y) \mapsto (x, y \cdot (a^i)^x \bmod N).
\]
If we initialize the $n$ output qubits with $\ket{0^{n-1}1}$ -- representing the $1$-element in the multiplicative group $\Z_N^*$ -- circuit $Q_{f}^{\Shor}$ computes $a^x \bmod N$ on its output qubits.

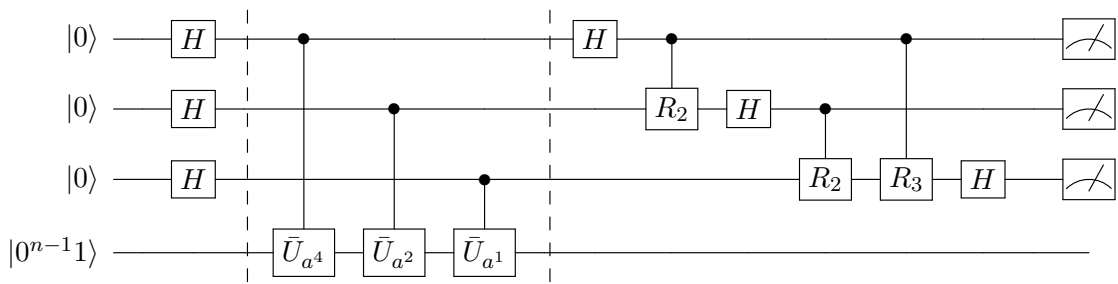
\begin{figure}[htb]
	\centering
	\mbox{
		\Qcircuit @C=1em @R=1em {
			&\lstick{\ket{0}}		& \qw 				 	& \gate{H} & \qw\barrier[-0.75cm]{3} 	&	\ctrl{3}		& \qw 				& 	\qw				& \qw\barrier[-0.6cm]{3} 	& \gate{H} 	& \ctrl{1}		& \qw		& \qw		& \ctrl{2} 	& \qw		& \qw & \meter\\
			&\lstick{\ket{0}}		& \qw 				 	& \gate{H} & \qw						& 	\qw 			& \ctrl{2}			& 	\qw				& \qw					 	& \qw	 		& \gate{R_2}	& \gate{H}& \ctrl{1}  & \qw		& \qw		& \qw & \meter\\	
			&\lstick{\ket{0}}		& \qw 				 	& \gate{H} & \qw						& 	\qw 			& \qw				&	\ctrl{1}		& \qw					 	& \qw	 		& \qw	 		& \qw	 	& \gate{R_2}& \gate{R_3}& \gate{H}& \qw & \meter\\
			&\lstick{\ket{0^{n-1}1}}		& \qw 				 	& \qw		 & \qw						&	\gate{\bar U_{a^4}}	& \gate{\bar U_{a^2}}	&	\gate{\bar U_{a^1}}	& \qw					 	& \qw		 	& \qw 			& \qw		& \qw		& \qw		& \qw		& \qw & \qw
		}
	}
	\caption{Circuit $Q_{f}^{\Shor}$. The $R_j$-gates realize phase shifts $e^{2\pi i/2^j}$.}\label{circuit:ME}
\end{figure}

Mosca and Ekert showed that the computations on the input bits of \linebreak$x=x_1x_2x_3$ can be sequentialized and thus pipelined bit-wise as depicted in \cref{circuit:oneME}. The idea is to measure the first qubit and feed in the result in the controlled rotation $R_2$ for the second bit $x_2$, whose computation is performed again on the same input qubit. Analogous the measurements of the first and second qubits are fed into the controlled rotations $R_2$ and $R_3$ of the third bit $x_3$.

\begin{figure}[htb]
	\centering	
	\includegraphics[width=\textwidth]{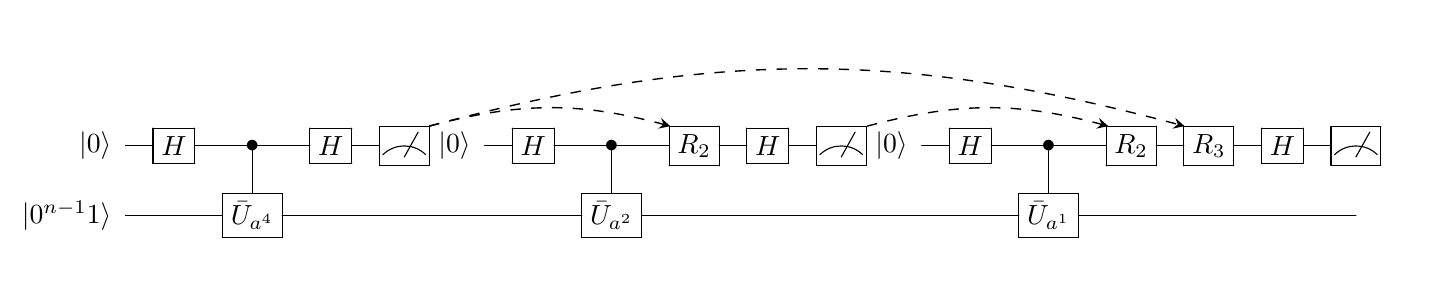}
	\caption{Mosca-Ekert circuit $Q_f^{\textsc{ME}}$}\label{circuit:oneME}
\end{figure}

It is worth noticing that the pipelined processing of the input bits of \linebreak$x=x_1x_2x_3$ leads in $Q_f^{\textsc{ME}}$ also to a pipelined processing of $f$, bit-wise for the $x_i$. Thus, $Q_f^{\textsc{ME}}$ does {\em not} fall into the quantum circuit class $Q_f^{\Period}$ from \cref{circuit:period}, for which we stressed the property that $f$ is only applied once. 

If we still want to directly apply \cref{main_theorem}, we can e.g. assume the existence of some {\em homomorphic} universal hash function family ${\cal H}_t: \Z_N^* \rightarrow (G, \circ)$, where $(G, \circ)$ is a group whose elements are represented by $t$ bits, and for all $h \in  {\cal H}_t$ we have 
\begin{equation}\label{eq:homohash}
  h(a^{x_1}) \circ h(a^{2x_2}) \circ h(a^{4x_3}) = h(a^{x_1} \cdot a^{2x_2} \cdot a^{4x_3} \bmod N) = h(a^x \bmod N) = h(f(x)).
\end{equation}

If this homomorphic property holds, then the circuit $Q_{h \circ f}^{\textsc{ME}}$ depicted in \cref{circuit:HME} can be interpreted as a hashed version of a {\em single application} of $f$, and thus the results of \cref{main_theorem} apply. Notice that in \cref{circuit:HME} we realize the quantum mappings
\[
  \hat U_{h(a^i)} : \{0,1\} \times (G, \circ) \rightarrow \{0,1\} \times (G, \circ), \ (x,y) \mapsto (x, y \circ h((a^i)^x \bmod N)).
\]
Moreover, the output qubits are initialized with $\ket{0^{t}}$, the neutral element of $(G, \circ)$.

\begin{figure}[htb]\vspace*{-1cm}
	\centering
	\includegraphics[width=\textwidth]{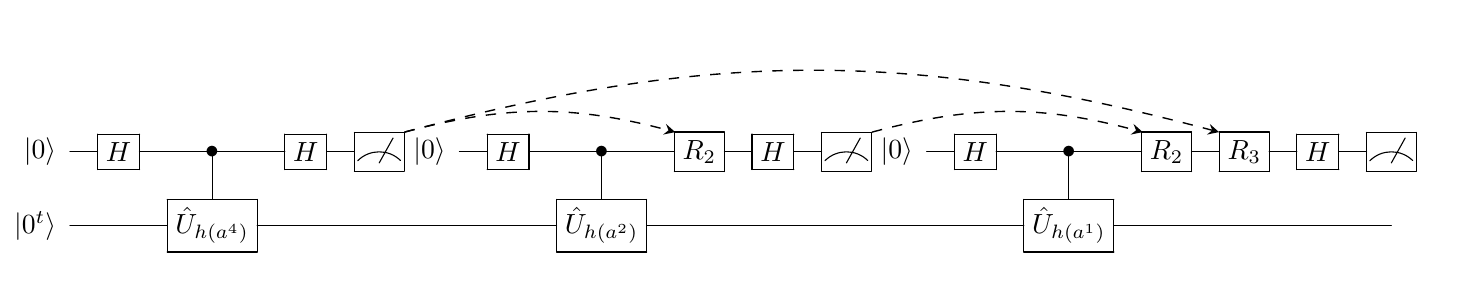}
	\caption{Hashed Mosca-Ekert circuit $Q_{h \circ f}^{\textsc{ME}}$}\label{circuit:HME}
\end{figure}

Therefore, we obtain a polynomial time quantum factorization (or discrete logarithm) algorithm in a {\em non-oracle} setting with only $1+t$ qubits, assuming the existence of an efficient realization of a $t$-bit range homomorphic universal hash function family. 

Let $p$ divide $N$ for some $t$-bit $p$. Then a homomorphic hash function is defined via the canonical ring homomorphism
\[
  h: \Z_N^* \rightarrow \Z_p^*, \ x \bmod N \mapsto x \bmod p.
\]
However, this hash function is useless, because it already assumes that we know a non-trivial factor of $N$.

\vskip 0.2cm

\noindent {\bf Open Problem:} Can we efficiently construct a homomorphic universal hash function family with $t<n$?
 
\vskip 0.2cm 
 
Especially interesting is the case $t= \bigO(\log n)$. In this case our circuit consumes only $\bigO(\log n)$ many qubits, and thus can be simulated in polynomial time on a classical computer.

\begin{corollary}
\label{cor:P}
The existence of an efficiently constructable homomorphic universal hash function family with $\bigO(\log\log N)$-bit range for any $N \in \mathbb{N}$ implies that factoring is in ${\cal P}$. 
An analogous result holds for the discrete logarithm problem in any abelian group.
\end{corollary}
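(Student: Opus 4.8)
The plan is to assemble three ingredients. First, the hashed Mosca--Ekert circuit $Q_{h\circ f}^{\textsc{ME}}$ of \cref{circuit:HME} uses only a single input qubit together with $t$ output qubits, hence $1+t$ qubits in total. Second, whenever $\mathcal H_t$ is homomorphic in the sense of \cref{eq:homohash}, the bit-wise pipelined exponentiation inside $Q_{h\circ f}^{\textsc{ME}}$ realizes precisely a hashed \emph{single} application of the period function $f$, so Shor's circuit falls into the period-finding class governed by \cref{main_theorem}; since $Q_f^{\Shor}$ meets the cancellation criterion by \cref{main_theorem_shor}, \cref{main_theorem_period} applies. Third, a quantum circuit acting on only $\bigO(\log n)$ qubits can be simulated deterministically in time $\mathrm{poly}(n)$: its amplitude vector has dimension $2^{\bigO(\log n)}=\mathrm{poly}(n)$, each elementary gate is a sparse $\mathrm{poly}(n)\times\mathrm{poly}(n)$ matrix, and there are $\mathrm{poly}(n)$ gates, after which the measurement distribution can be sampled exactly.

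Concretely, given $N$ with $n=\lceil\log_2 N\rceil$ bits, set $t=\bigO(\log\log N)=\bigO(\log n)$ and let $\mathcal H_t=\{h:\Z_N^*\to(G,\circ)\}$, $|G|=2^t$, be the assumed efficiently constructible homomorphic universal family. Pick $a\in_R\Z_N^*$, put $f(x)=a^x\bmod N$ with period $d=\textrm{ord}_N(a)$, and for each of the $\bigO(1)$ required measurements draw a fresh $h\in_R\mathcal H_t$ and run $Q_{h\circ f}^{\textsc{ME}}$. The gate $\hat U_{h(a^i)}$ only needs the group element $h\bigl((a^i)^x\bmod N\bigr)$, computable in $\mathrm{poly}(n)$ time by the efficiency hypothesis on $\mathcal H_t$ (modular exponentiation followed by evaluating $h$), so the circuit has $\mathrm{poly}(n)$ gates over $1+t=\bigO(\log n)$ qubits. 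By \cref{main_theorem_period}, conditioned on measuring $y\neq 0$ the output distribution equals that of unhashed Shor, so the classical post-process (continued fractions plus the standard order-to-factor reduction) still recovers $d$, and hence a non-trivial factor of $N$, in $\mathrm{poly}(n)$ time, using only $\tfrac{1}{1-2^{-t}}\le 2$ times as many (i.e.\ still a constant number of) measurements.

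Running each such $\bigO(\log n)$-qubit circuit on a classical machine via the simulation above and feeding the sampled outcomes into the classical post-process then yields a classical algorithm that factors $N$ in time $\mathrm{poly}(n)$ --- the only randomness being the choice of $a$ and of the $h$'s, with constant overall success probability, which is the sense in which the corollary asserts ``factoring is in ${\cal P}$''. For the discrete logarithm statement, one repeats the argument verbatim with the Mosca--Ekert circuit instantiated on the discrete-logarithm function in the given abelian group (as in Shor's / Eker\aa--H\aa stad's dlog variant), using a homomorphic universal hash family on that group; both the period-finding analysis of \cref{main_theorem_period} and the $\bigO(\log n)$-qubit simulation carry over unchanged.

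The main obstacle is the second ingredient: one must verify rigorously that the \emph{pipelined, measure-and-feed-forward} hashed circuit $Q_{h\circ f}^{\textsc{ME}}$ produces exactly the superposition $\ket{\Phi_h}$ that \cref{main_theorem} presupposes for a single query to $f$. This splits into two checks --- that the homomorphic decomposition $h(a^{x_1})\circ h(a^{2x_2})\circ h(a^{4x_3})=h(f(x))$ makes the composed controlled maps compute $h\circ f$ coherently on the output register, and that the Mosca--Ekert sequentialization of the input qubit (already known to be output-equivalent to the parallel circuit in the unhashed case) remains output-equivalent once $f$ is replaced by $h\circ f$. Granting this equivalence, the remainder is bookkeeping: counting qubits, bounding the measurement overhead by $\tfrac{1}{1-2^{-t}}\le 2$, and invoking the elementary polynomial-time simulability of logarithmically-sized quantum circuits.
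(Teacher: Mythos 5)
Your proposal is correct and follows essentially the same route as the paper: the hashed Mosca--Ekert circuit with a homomorphic $t$-bit-range universal hash family uses $1+t$ qubits, satisfies the cancellation criterion of \cref{main_theorem} so that \cref{main_theorem_period} gives correctness with only a constant-factor measurement overhead, and for $t=\bigO(\log\log N)=\bigO(\log n)$ the resulting $\bigO(\log n)$-qubit circuit is classically simulable in polynomial time. The paper itself asserts the pipelined-circuit equivalence and the simulability without further detail, so your explicit flagging of the measure-and-feed-forward equivalence and of the residual randomness (really placing factoring in a randomized polynomial-time class) only makes the argument more careful than the original.
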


\cref{cor:P} implies that under the assumption that factoring $N$ is hard, respectively that discrete logarithms in some group $G$ are hard, there is no efficiently constructable homomorphic universal hash function family in $\Z_N^*$, respectively $G$, with range logarithmic in the bit-size of the group.


\section{Realization of Hashed Mosca-Ekert -- Solving DDH in $\F_{p^m}$ with a $\frac 1 m$-fraction of qubits }\label{sec:ddh}

In this section we describe a practical realization of our hash technique for the Decisional Diffie Hellman (DDH) problem in $\F_{p^m}$. 

\begin{definition}[DDH problem in $\F_{p^m}$]
	Let $\F_{p^m}^*$ be the multiplicative group of a finite field such that $q_1=\frac{p^m-1}{p-1}$ and $q_2=\frac{p-1}{2}$ are both prime. Let $g$ generate the quadratic residues group QR in $\F_{p^m}^*$ with order $\frac{p^m-1}{2}=q_1q_2$. On input $(p, m, g, g^{a}, g^b, g^c)$, one has to  distinguish between the two cases, where either $c = ab \mod q_1q_2$ or $c$ is uniformly random in $\Z_{q_1q_2}$.
\end{definition}

Classically, one may solve DDH in $\F_{p^m}^*$ by computing the discrete logarithms $a$ in QR, and check whether $(g^b)^a = g^c$. This takes superpolynomial time $L_{\frac 1 3}(p^m)$ using the Number Field Sieve. Notice that the quasi-polynomial algorithms of Barbulescu et al.~\cite{DBLP:conf/eurocrypt/BarbulescuGJT14} only apply in small characteristic. Alternatively, one may solve all three discrete logarithms $a,b,c$ in the subgroup~$G$ generated by $g^{q_1}$ of order $q_2$, and check whether $c = ab \mod q_2$. This takes time $\bigO(\sqrt{q_2}) = \bigO(\sqrt{p})$ using Pollard's Rho method. Thus, the best known classical algorithms require time $\min\{ L_{\frac 1 3}(p^m), \bigO(\sqrt{p}) \}$.

Quantumly, the discrete logarithm computation of $a$ in QR with either Shor, Eker\aa -H\aa stad, or Mosca-Ekert requires $m \log p$ output qubits to represent elements in QR. However, let us have a closer look at a quantum version of the second classical algorithm that works in $G$ of order $q_2$. The function $N$ that raises $g$ to the $q_1$-th power has the remarkable property that we {\em compress} the output via some group {\em homomorphism}, thus $N$ is a multiplicative hash function. Since $q_1=\frac{p^m-1}{p-1}$, we observe that $N$ is nothing but the multiplicative norm map
\[
  N:\F_{p^m}^* \rightarrow \F_p^*, \ x \mapsto x^{\frac{p^m-1}{p-1}}.
\]
Thus, $G = \F_p^*$ which admits representations of elements with only $\log p$ bits. As a consequence, hashing the output down via our homomorphic hash function $h=N$ saves us a factor of $m$ in the number of output qubits. Since Mosca-Ekert uses only a single input qubit, overall we go down from $1+m \log p$ to $1+ \log p$ qubits, again by a factor $\Theta(m)$. 

Our new quantum algorithm {\em first} classically hashes a discrete logarithm instance $(g,g^a)$ via $h=N$, and {\em second} applies quantum period finding via some function $f$. We show in the following that this efficiently realizes a universal embedding of $h \circ f$, where we first compute $f$ and then hash. Thus, it perfectly fits our hashing framework (besides the fact that we do not have a hash function family). Notice that this is a good example for usefulness of our hash technique even in the case of a single hash function, and without the need of a universal hash function family. 

Let us define the function 
\[
f_{g,g^a}:\Z\times \Z\mapsto \textrm{QR} \text{, with } (x,y) \mapsto g^{x}\cdot (g^a)^{y}\;
\]
with period $(a,-1)$. Thus, $f_{g,g^a}$ allows to solve the discrete logarithms in QR. 
Now observe that 
\begin{align*}
(h\circ f_{g,g^a})(x,y) &=h(f_{g,g^a}(x,y))
=h\left(g^{x}\cdot \left(g^a\right)^{y}\right)
=\left(g^{x}\cdot \left(g^a\right)^{y}\right)^{\frac{p^m-1}{p-1}}\\
&=\left(g^{\frac{p^m-1}{p-1}}\right)^x \cdot \left(\left(g^a\right)^{\frac{p^m-1}{p-1}}\right)^{y}
=h(g)^{x}\cdot \left(h\left(g^a\right)\right)^{y}\\
&=f_{h(g),h(g^a)}(x,y)
\end{align*}

Therefore, hashing the discrete logarithm instance $(g,g^a)$ via $h$ realizes a (very) efficient implementation of $h\circ f_{g,g^a}$, as desired.

\bibliographystyle{alpha}
\bibliography{literature}

\end{document}